\newtheorem{theorem}{Theorem}
\newtheorem{lemma}[theorem]{Lemma}
\newenvironment{proofT}[1]{\textit{Proof of {#1}.}}{\hfill$\square$}
\newenvironment{proof}[1]{\textit{Proof.}}{\hfill$\square$}
\begin{document}

\title{Quantum speedup for track reconstruction in particle accelerators}

\author{D. Magano}
    \affiliation{Physics of Information and Quantum Technologies Group, Instituto de Telecomunica\c{c}\~{o}es, Portugal}
    \affiliation{Instituto Superior T\'{e}cnico, Universidade de Lisboa, Portugal}
\author{A. Kumar}
    \affiliation{Physics of Information and Quantum Technologies Group, Instituto de Telecomunica\c{c}\~{o}es, Portugal}
    \affiliation{Department of Mathematics, Clarkson University, USA}
\author{M. K\={a}lis}
    \affiliation{Center for Quantum Computer Science, Faculty of Computing, University of Latvia, Latvia}
\author{A. Loc\={a}ns}
    \affiliation{Center for Quantum Computer Science, Faculty of Computing, University of Latvia, Latvia}
\author{A. Glos}
    \affiliation{Institute of Theoretical and Applied Informatics, Polish Academy of Sciences, Poland}
\author{S. Pratapsi}
    \affiliation{Physics of Information and Quantum Technologies Group, Instituto de Telecomunica\c{c}\~{o}es, Portugal}
    \affiliation{Instituto Superior T\'{e}cnico, Universidade de Lisboa, Portugal}
\author{G. Quinta}
    \affiliation{Physics of Information and Quantum Technologies Group, Instituto de Telecomunica\c{c}\~{o}es, Portugal}
\author{M. Dimitrijevs}
    \affiliation{Center for Quantum Computer Science, Faculty of Computing, University of Latvia, Latvia}
\author{A. Rivo\v{s}s}
    \affiliation{Center for Quantum Computer Science, Faculty of Computing, University of Latvia, Latvia}
\author{P. Bargassa}
    \affiliation{Portuguese Quantum Institute, Portugal}
    \affiliation{Laborat\'{o}rio de Instrumenta\c{c}\~{a}o e F\'{i}sica Experimental de Part\'{i}culas, Portugal}
\author{J. Seixas} 
    \affiliation{Instituto Superior T\'{e}cnico, Universidade de Lisboa, Portugal}
    \affiliation{Portuguese Quantum Institute, Portugal}
    \affiliation{Center for Physics and Engineering of Advanced Materials, Portugal}
\author{A. Ambainis} 
    \affiliation{Center for Quantum Computer Science, Faculty of Computing, University of Latvia, Latvia}
\author{Y. Omar}   
    \affiliation{Physics of Information and Quantum Technologies Group, Instituto de Telecomunica\c{c}\~{o}es, Portugal}
    \affiliation{Instituto Superior T\'{e}cnico, Universidade de Lisboa, Portugal}
    \affiliation{Portuguese Quantum Institute, Portugal}

\date{\today}

\begin{abstract}
To investigate the fundamental nature of matter and its interactions, particles are accelerated to very high energies and collided inside detectors, producing a multitude of other particles that are scattered in all directions. 
As charged particles traverse the detector, they leave signals of their passage. 
The  problem  of  track  reconstruction  is  to  recover  the  original  trajectories  from  these signals. 
This challenging data analysis task will become even more demanding as the luminosity of future accelerators increases, leading to collision events with a more complex structure.
We identify four fundamental routines present in every local tracking method and analyse how they scale in the context of a standard tracking algorithm.
We show that for some of these routines we can reach a lower computational complexity with quantum search algorithms.
Although the found quantum speedups are mild, this constitutes, to the best of our knowledge, the first  rigorous evidence of a quantum advantage for a high-energy physics data processing task.
\end{abstract}

\maketitle

\section{Introduction}

Most of our understanding about fundamental interactions and the sub-nuclear structure of matter comes from exploring the results of colliding highly energetic particles in accelerator machines. 
These collisions produce a myriad of secondary particles, which must be detected and their trajectories subsequently reconstructed.
%using especially designed algorithms. 
The search for new Physics beyond the Standard Model depends on being able to detect and process  extremely rare events among vasts amounts of data.
Experimental High-Energy Physics (HEP), especially the Large Hadron Collider (LHC) programme at the European Organization for Nuclear Research (CERN), is for this reason one of the most computationally demanding activities in the world \cite{Roadmap}. 
Moreover, this demand is expected to grow dramatically after 2026 with the upcoming High-Luminosity phase of the LHC \cite{HL-LHC}, and even more so in future machines, such as the Future Circular Collider \cite{FCC}.
As such, processing the data obtained in the particle detectors into useful information that can be analysed by high-energy physicists will become such a formidable task that it will likely require completely new technological paradigms.
Quantum computing, promising significant speedups or reduced computational and energetic resources for specific problems, may play a key role in overcoming these challenges.

In recent years, quantum solutions have been proposed for specific tasks in HEP data processing and analysis.
These include track reconstruction
\cite{TrackAnnealers,TrackAnnealingInspired,TrackClusteringAnnealers, Tuysuz2021,QMLinHEP, Quiroz2021}, event selection \cite{HiggsAnnealing, EventsQML,SauLanWu, qSVM21, Belis2021, GroverHEPevents, pires1, pires2}, and event simulation \cite{Geant2021}.
These promising proposals were typically conceived already with a quantum framework in mind, and were tested with very small problem instances due to the present quantum hardware limitations (for details on the computational scaling of previous quantum algorithms for track reconstruction see Appendix \ref{app:comparison}).
Therefore, it remained an open question whether one could prove a quantum speedup for a relevant task meeting the large-scale requirements of modern HEP data processing.
One route towards this goal is to consider the computational complexity of standard HEP algorithms and whether quantum computers could be used to improve it.
In \cite{Jets}, for example, the classical and quantum computational scaling of a well-known jet clustering algorithm is studied: a quantum algorithm with speedup is found, as well as an alternative classical algorithm that matches the quantum performance, therefore establishing no quantum advantage.

In this article, we consider the problem of track reconstruction (also known as \emph{tracking}) from a computational complexity perspective.
In particle physics experiments, bunches of accelerated particles are collided inside tracking detectors.
At these collisions, new particles are created and scattered in all directions.
As charged particles cross the detector's multiple layers, they leave signals of their passage, which are converted into three-dimensional points called \emph{hits}.
The collection of hits that are left by such a particle is called that particle's \emph{track}.
The event record of an experiment consists of the totality of signals from all particles of an interaction (or possibly  several interactions) after one full readout of the detector.
The goal of tracking is to reconstruct the particles' tracks from the event record --
see Figure \ref{fig:event_example} for an illustration of the problem.
Given that in real experiments an event can contain several thousand hits, most combinations of hits (track candidates) will not correspond to an actual particle.
Therefore, we need efficient algorithms to be able to reconstruct the tracks in a reasonable time.

\begin{figure*}[t]
\centering
\subfloat[Input.]{\label{fig:input}
  \includegraphics[width=0.35\linewidth]{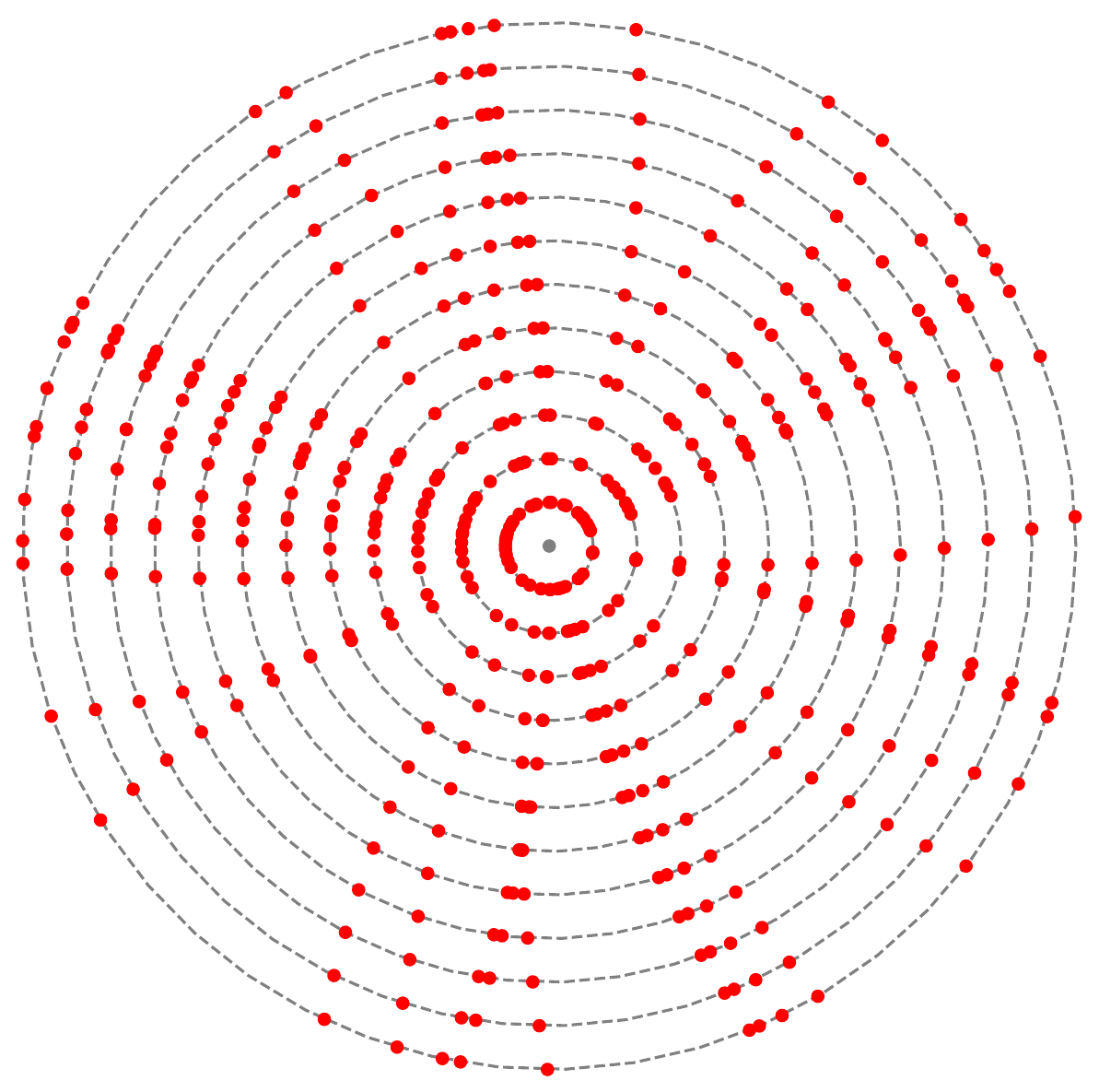}
}
\qquad
\subfloat[Output.]{\label{fig:output}
  \includegraphics[width=0.35\linewidth]{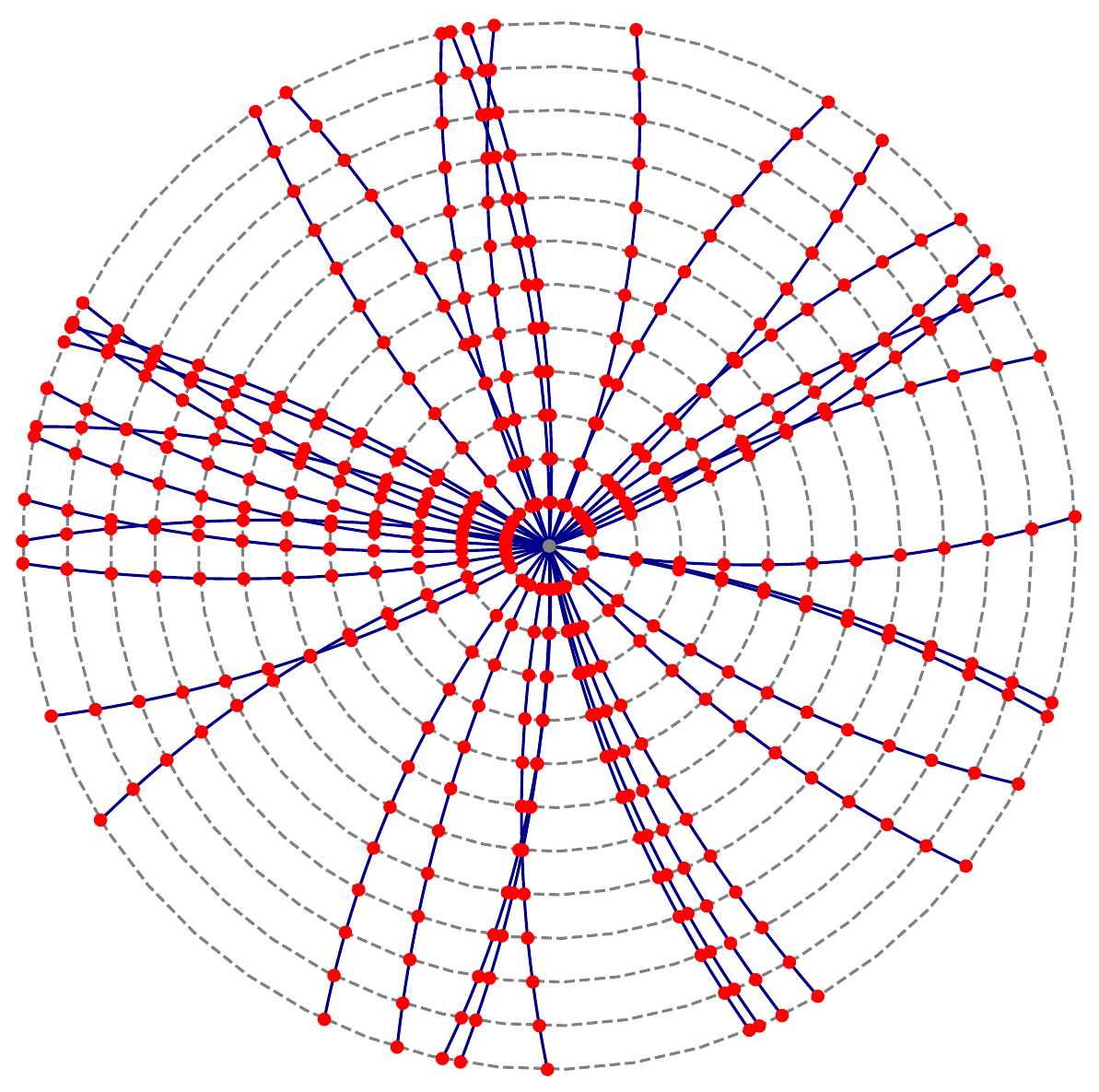}
}
\caption{Illustration of track reconstruction.
Transverse view  (with respect to the beam line) of a tracking detector with cylindrical  layers  (dashed  grey  lines).
The input to tracking is a set of hits (red circles) corresponding to detections of the particles' passage (a).
We recover the original trajectories (blue lines) by grouping hits that belong to the same particle, i.e., by reconstructing the particles' tracks (b).}
\label{fig:event_example}
\end{figure*}

The current tracking methods can be broadly classified into local and global methods \cite{ReviewTracking}.
Global methods treat all hit information in an equal and unbiased way and are essentially clustering algorithms in some feature space.
All the quantum approaches so far were based on global methods.
However, because these methods can be very inefficient in terms of speed, local methods are still the standard at several reconstruction programmes in high-energy physics \cite{JINST, ATLAS, Braun2018}.
For this reason, they will be the focus of this article.

We identify four fundamental computational routines that are present in every local tracking method: seeding, track building, cleaning, and selection.  
In the first stage, seeding, we form initial rudimentary track candidates, called \emph{seeds}, using just a few hits.
Then, in the track building stage, we extrapolate the seeds’ trajectories along the expected path of the particle and build track candidates by adding compatible hits from successive detector layers.
This strategy may lead to multiple track candidates describing the same particle.
To avoid such redundancies, we apply a cleaning process that removes track candidates that are too similar.
Finally, only the track candidates that respect some quality criteria (based on the quality of the fit between the trajectory and the corresponding hits) are output from the reconstruction process -- this is called the selection stage.
Figure \ref{fig:tracking} provides a summary of the four stages.

For each of these stages, we analyze the computational scaling of the Combinatorial Track Finder (CTF) algorithm \cite{JINST}, which was the basis of the tracking program of the CMS collaboration during the 2016 LHC  run \cite{Sguazzoni2016}.
While we focus on the CTF for concreteness, we point out that the underlying structure is the same as for most local track reconstruction methods, such as the ones used in ATLAS \cite{ATLAS} or Belle II \cite{Braun2018}.
For both the seeding and track building stages, we show that we can reproduce the same output (up to bounded error probability) with lower quantum complexity by an adequate use of quantum search routines.
For the cleaning routine, we find an alternative classical algorithm with improved scaling that is optimal up to poly-logarithmic factors.
The selection stage is already trivially optimal from a complexity perspective.
We emphasize that the four tracking routines are analyzed independently, adding flexibility to our results.
For example, the CMS collaboration recently adopted a different seeding strategy \cite{Bocci2020}, but the structure of the other three stages remains  unchanged.
Finally, we consider executing the entire reconstruction coherently, where we do not register the outputs of the individual stages, but are only interested in the final reconstructed tracks.
We show that this scenario can lead to further quantum advantage.
Our results are summarized at the end of the article in Table \ref{tab:summary}.

\begin{figure*}[t]
\centering
\subfloat[Seeding.]{\label{fig:seeding}
  \includegraphics[width=0.49\linewidth]{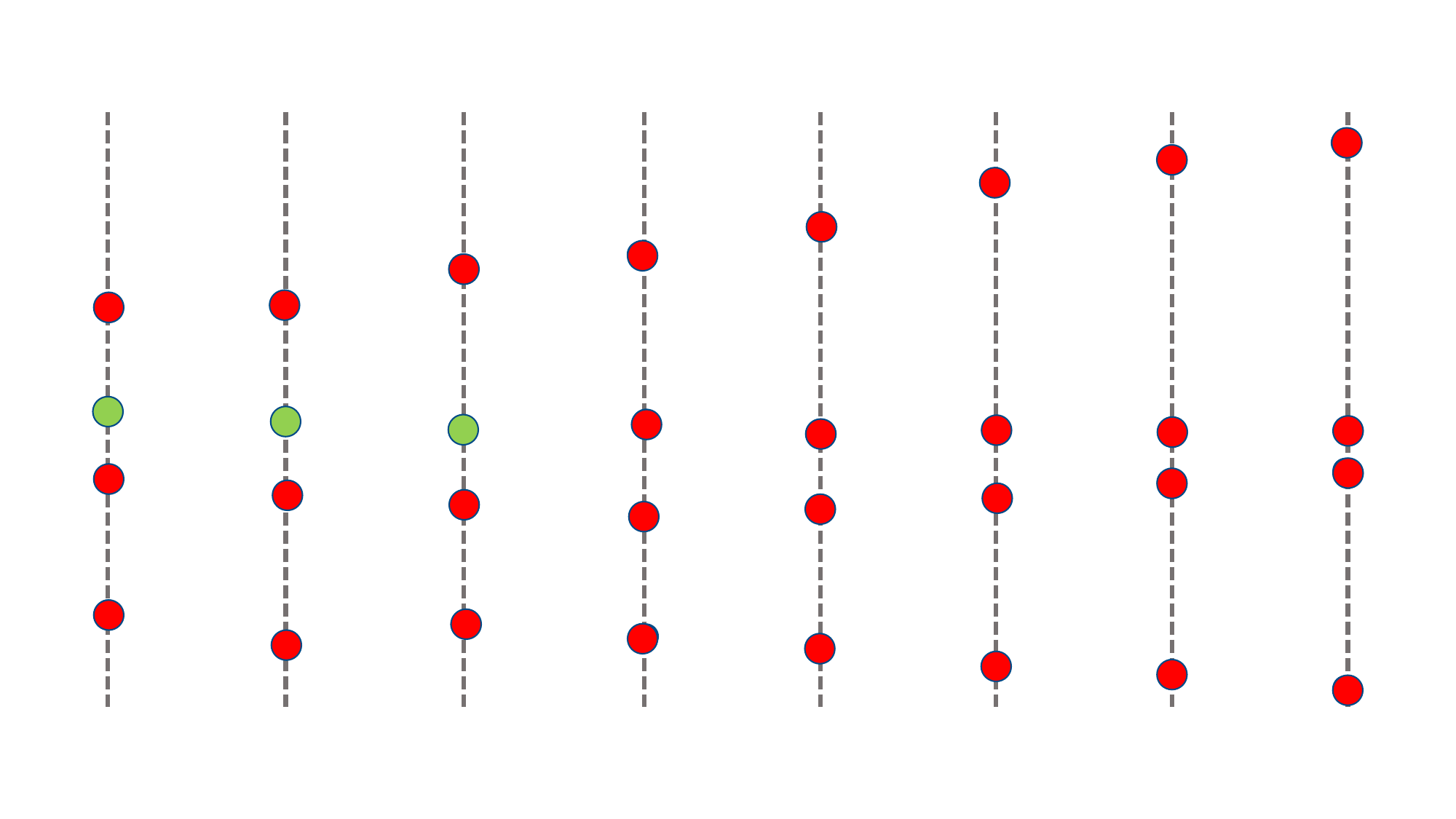}
}
\subfloat[Track Building.]{\label{fig:building}
  \includegraphics[width=0.49\linewidth]{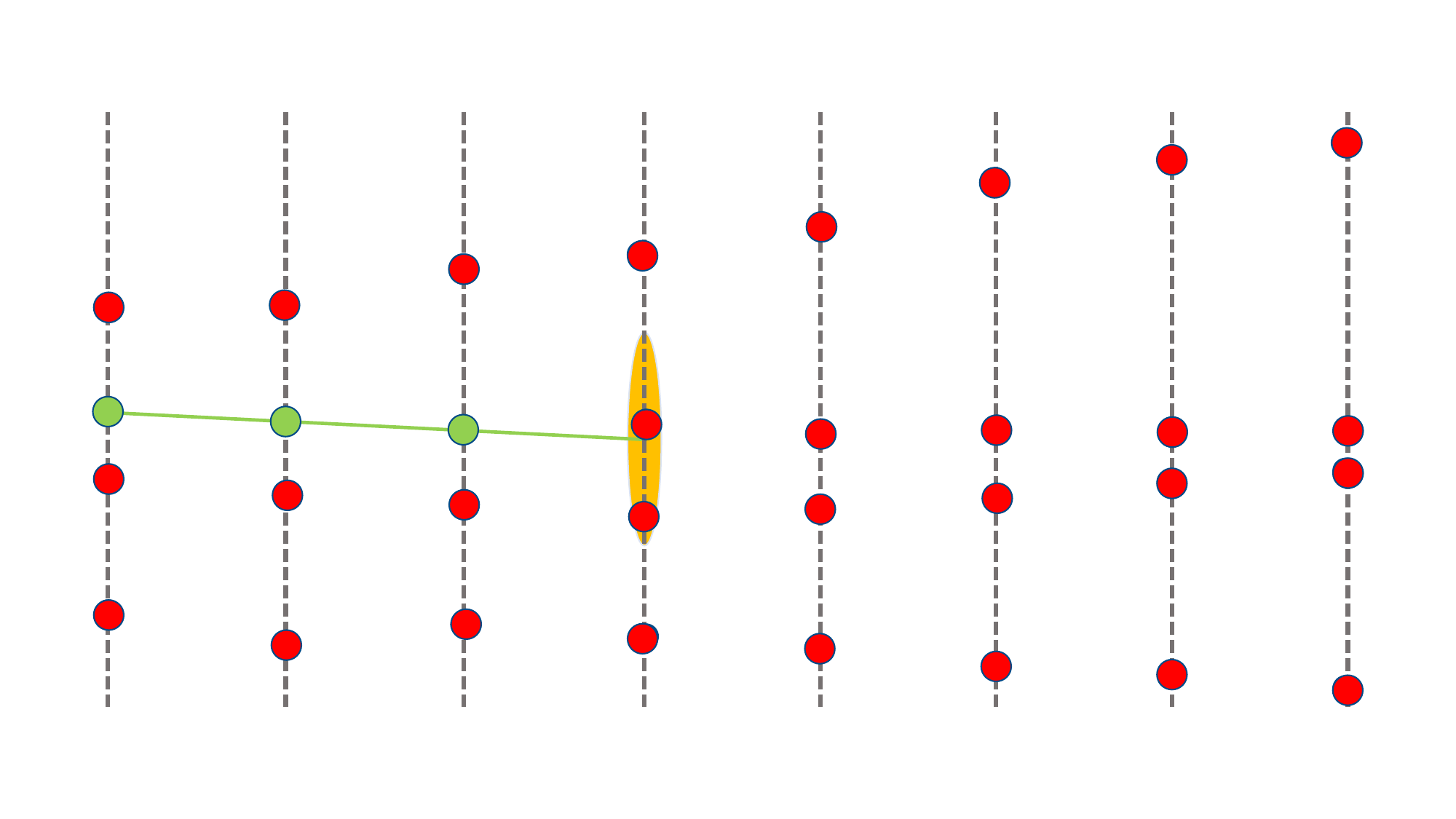}
}
\hfill
\subfloat[Cleaning.]{\label{fig:cleaning}
  \includegraphics[width=0.49\linewidth]{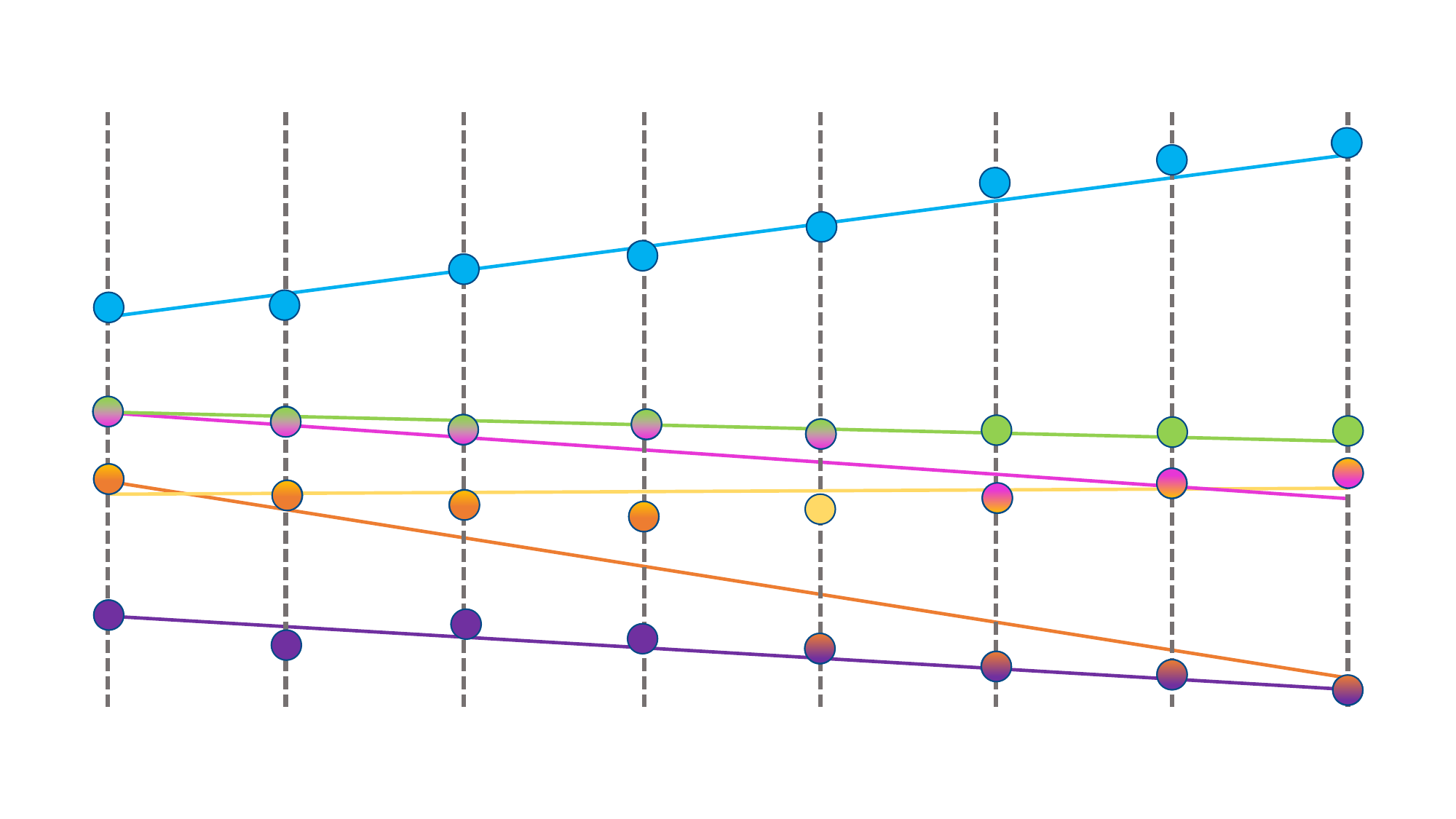}
}
\subfloat[Selection.]{\label{fig:selection}
  \includegraphics[width=0.49\linewidth]{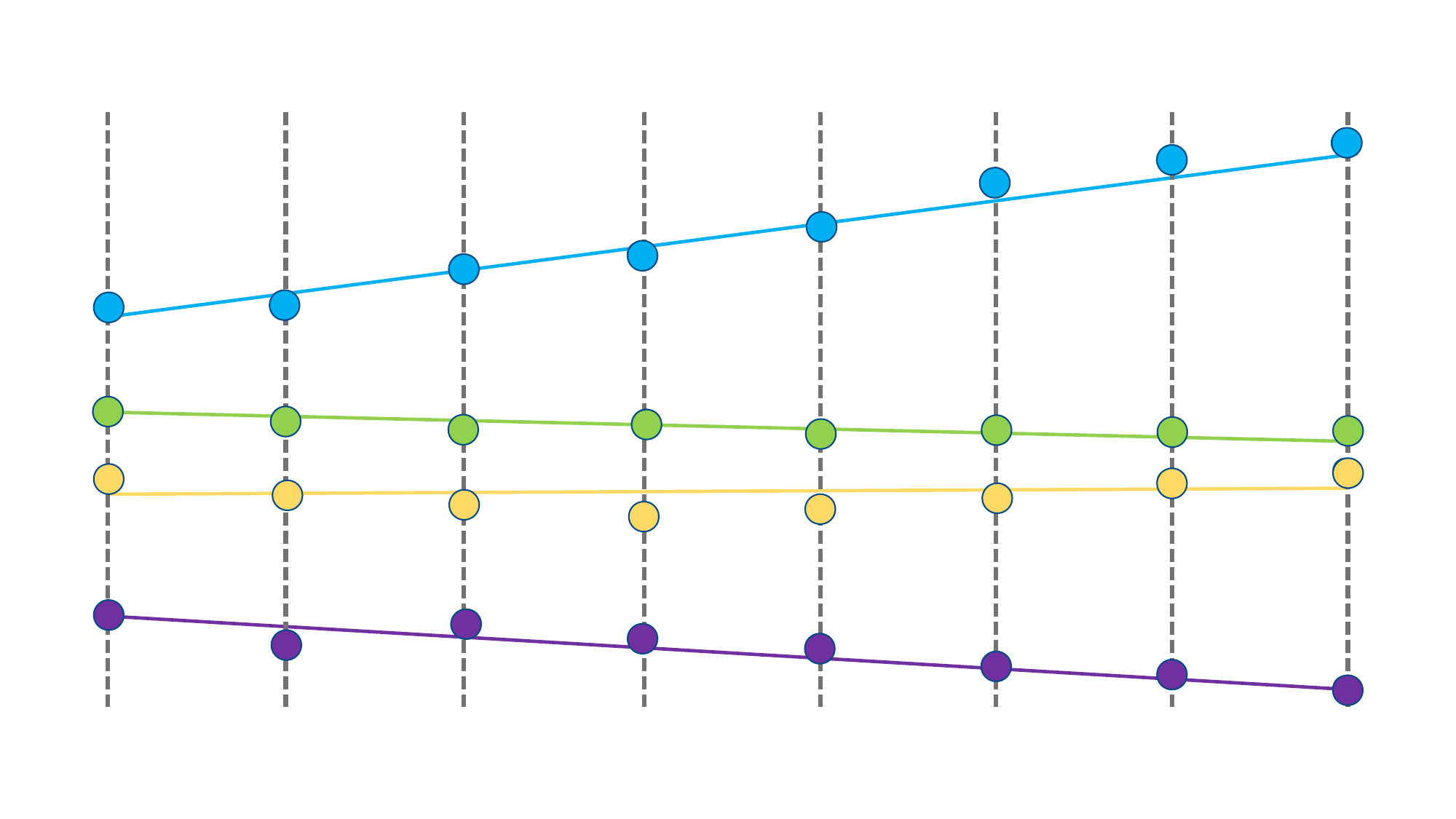}
}
\caption{Track reconstruction: the four stages.
In these figures, the charged particles resulting from a collision travel from left to right, recording hits (circles) as they cross the detector layers (dotted grey lines).
(a) Seeding.
Local track reconstruction methods start by forming seeds, rudimentary track candidates with just a few hits.
In this figure we highlight in green a possible seed with three hits.
(b) Track Building.
The trajectory of each seed is extrapolated throughout the detector, building track candidates by adding compatible hits layer by layer.
(c) Cleaning.
The found tracks are here represented in different colours.
During the cleaning stage, if two track candidates share too many hits one of them gets discarded.
The pink track, for example, is filtered at this stage because of the hits it shares with the green one.
(d) Selection.
Only the track candidates that satisfy certain quality criteria are selected as the output of the reconstruction process.}
\label{fig:tracking}
\end{figure*} 

\section{The tracking problem}

We present a simplified model of tracking, omitting some details that do not significantly influence the complexity analysis.
In Appendix \ref{app:trackingmodel} we formalize further our assumptions and discuss how they could be relaxed.

Let $n$ denote the number of charged particles present in the event record.
Their trajectories originate from a fixed interaction region, but not necessarily from a single collision spot  (this accounts for the \emph{pileup} effect \cite{JINST}).
Since the detector is immersed in a quasi-uniform co-axial magnetic field,  we expect the particles to follow helical trajectories aligned with the field's direction.
We model the particles' trajectories as independent and identically distributed random variables, such that the underlying probability distribution is strictly positive on the corresponding parameter space.
The assumption that the different trajectories are uncorrelated represents a common practice in track reconstruction.
The detector consists of a set of $L$ cylindrical sensor layers aligned with the beam line.
The layers are indexed from $0$ to $L-1$ (the most inner layers having the smallest indices). 
We assume that each particle traverses every layer exactly once.
As a particle crosses a layer, it leaves a complex signal resulting from the interaction with the sensor's pixels.
We treat these signals simply as three-dimensional points, called hits.
Thus, we require the granularity of the sensors to be high enough such that each detected hit can always be differentiated from others. 
At every layer, we identify the hits with labels from $\{0, \ldots, n-1\}$, using the notation $\mathbf{m}_{l,j}$ for the coordinates of $j$th hit in layer $l$.
It is possible that some hits are not measured at all due to sensor inefficiencies, that is, we do not necessarily have a hit $\mathbf{m}_{l,j}$ for every pair $(l,j)$.

\section{Computational Model \label{sec:computationalmodel}}

The running time of any tracking algorithm is the result of different factors.
Naturally, the larger the number of recorded particles the more demanding track reconstruction becomes.
At the LHC, sophisticated computational architectures are employed to optimize the running time \cite{ICTchallenges, PantaleoThesis}.
Furthermore, the parameters of the tracking software (such as the minimum $p_T$ of the tracks, see Appendix \ref{app:seedingalgorithm})are carefully adjusted to achieve in useful time a track reconstruction with the desired accuracy, under some assumptions on the observed events.

In this work, we offer a different perspective on tracking, focusing on the computational complexity of the problem.
In other words, we are interested in understanding how it fundamentally scales with input size.
As is common in the theoretical analysis of algorithms, we concern ourselves with the asymptotic limit of arbitrarily large number of particles.
We adopt the standard ``big O'' notation for asymptotic upper bounds. For two functions $f$ and $g$ from $\mathbb{N}$ to $\mathbb{R}$ we say that $f=O(g)$ if $\exists C, x_0 > 0: \forall x, \left(x > x_0 \implies f(x) < C \cdot g(x) \right)$. 
We write $f=\Omega(g)$ if $g=O(f)$.
We say that $f=\Theta(g)$ if $f=O(g)$ and $g=O(f)$. By ``constant time'', we mean $O(1)$.

For our complexity analysis we only consider the dependence on the variable $n$, the number of particles.
Evidently, the data in the event record also depends on quantities like the number of layers of the detector, the granularity of the sensors, or the efficiency of the detectors.
But these are fixed from the experimental hardware and do not vary from event to event.
On the other hand, we expect the average $n$ to grow as we increase the beam's instantaneous luminosity.
Thus, we believe $n$ to be an appropriate measure of the size of the input to the tracking problem.

When considering the classical algorithms we assume that, given $(j,l)$, we can access $\mathbf{m}_{l,j}$ in constant time.
Moreover, simple arithmetic operations on the hit's coordinates are counted as taking $O(1)$ time.
In the context of the quantum algorithms, we work in the circuit model, measuring time as the number of quantum gates used. 
We assume access to a QRAM that is able to load classical data in coherent superposition in logarithmic time in the number of memory cells \cite{QRAM}.
That is, we have access to a unitary $\mathbf{Q}$ such that, given a superposition $\ket{\psi} = \sum_{l, j} \alpha_{l, j} \ket{l,  j } \ket{ 0 }$, applying $\mathbf{Q}$ yields the state
\begin{equation}
\mathbf{Q} \ket{\psi}
=
\sum_{l, j} \alpha_{l, j} \ket{l,j} \ket{b} \ket{\mathbf{m}_{l,j} }.
\end{equation}
In this expression, $\ket{b}$ is a flag qubit indicating whether the index $j$ corresponds to an actual hit in layer $l$ (there may be fewer than $n$ detections per layer).
If so, $\ket{\mathbf{m}_{l,j}}$ is a computational basis quantum state encoding the coordinates  of the $j$th hit of layer $l$ (otherwise, this register can be in an arbitrary state, say, the all-zero state).

Our conclusions will be critically dependent on the existence of a QRAM with the above mentioned properties, which represents a common practice in theoretical work on quantum algorithms.
Nevertheless, we point out that, even though there have been proposals of physical architectures for implementing QRAM \cite{QRAM_architectures}, there are still significant challenges to overcome before such a device can be realized in practice.

Our choices of computational models represent the distinct standard practices in classical/quantum algorithm analysis.
To attenuate the differences in the computational models, we present the results in $\tilde{O}$ notation, that is, omitting the poly-logarithmic dependencies in the complexities.

Another difference between the classical and quantum scenarios is that all the presented classical algorithms are deterministic, meaning that for a given input they will always output the same answer.
On the other hand, our quantum algorithms are probabilistic.
That is, they output the correct answer with some constant probability.
This probability can always be amplified to $1-\epsilon$, for any $\epsilon \in [0,1[$, at a cost of $O(\log(1 / \epsilon))$ repetitions.

\section{Four stages in track reconstruction}

\subsection{On the density of hits}

Naturally, the number of hits in the event record grows proportionally to the number of particles.
In fact, for any fixed region of the detector we expect the number of registered hits to grow linearly with $n$.
In Appendix \ref{app:trackingmodel} we rigorously establish the following Lemma, which will become useful for the complexity analysis:

\begin{lemma}
For almost all events, the number of hits in any fixed, open (non-empty) subset of any detector layer grows as $\Theta(n)$.
\label{thm:almostallevents}
\end{lemma}

\subsection{Seeding}

Local track reconstruction methods start by forming rudimentary track candidates, known as track seeds, with just a small number of hits from a specific region of the detector (usually the innermost layers due to their higher granularities).
In the seeding scheme of \cite{JINST}, seeds are formed with only three hits.
But our analysis easily extends to seeds with $c$ hits, so in what follows we consider this more general scenario.
We impose some constraints on the trajectories defined by the seeds, namely a minimum transverse momentum and a maximum transverse and longitudinal distances to the presumed production point of the particle.
Any hit $c$-tuplet disobeying these conditions is not considered a valid seed.

The CTF's seeding routine first searches over the first two seeding layers of the detector for pairs of hits that are compatible with the seeding criteria.
These pairs are extended into hit triplets by searching over the third layer for compatible hits (meaning that we want the hit triplets to respect the imposed conditions on the seeds' trajectories).
These triplets are extended in a similar way into hit quadruplets, then hit $5$-tuplets, and so on until the $c$-th layer is reached.
In Appendix \ref{app:pseudocodes} we provide a pseudo-code representation of the seeding routine (Algorithm \ref{algo:seeding}).

Suppose, now, that we have carried this process until the $j$-th layer (for some general $j$ between 2 and $c-1$).
Each hit $j$-tuplet determines a region in the $(j+1)$-th layer where we could find a hit continuation compatible with the seeding constraints -- see Figure \ref{fig:seedsearching} for an illustration of  the $j=2$ case.
According to Lemma \ref{thm:almostallevents}, we expect to find $\Theta(n)$ hits in that region.
So, the number of selected tuplets grows by a multiplicative $\Theta(n)$ factor at each layer (although the multiplicative constant shrinks as $j$ increases), resulting in $\Theta(n^c)$ seeds.

\begin{figure}[t]
\centering
\includegraphics[width=0.9\linewidth]{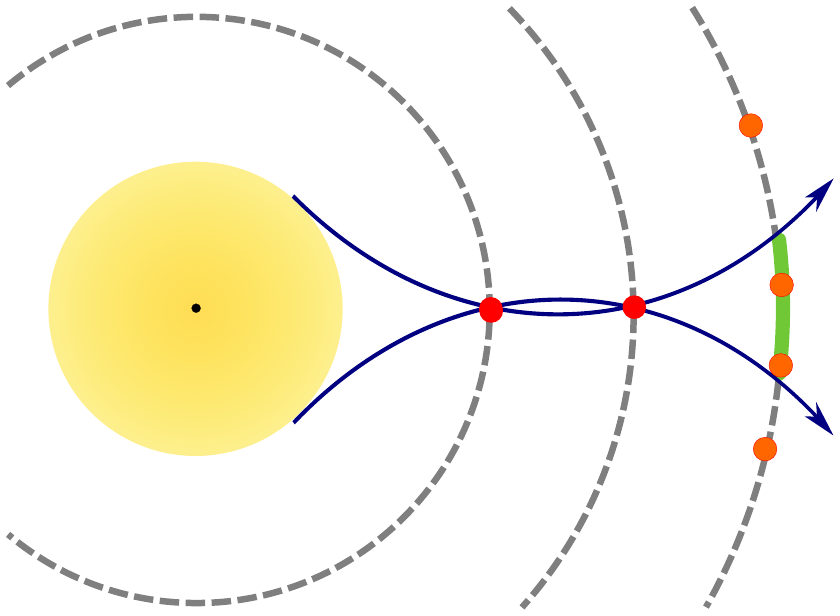}
\caption{Searching for seeds.
The dotted grey lines represent the first three detector layers (transverse view).
The yellow region around the beam axis (the black dot) is the region where we admit the collisions may occur.
In this illustration, we are forming a seed with the two hits from the inner layers marked in red.
For any hit in the third layer (orange circles) to be considered compatible with this seed, it must lie in the green region.
This is the region where the trajectories that respect the seeding criteria and that pass through the two red hits cross.
In blue we draw two trajectories originating from the outer edge of the collision region.}
\label{fig:seedsearching}
\end{figure}

\begin{theorem}
The CTF's seeding algorithm, Algorithm \ref{algo:seeding} (classical), has complexity $O(n^c)$, where $c$ is the number of hits per seed.
\label{thm:classicalseeding}
\end{theorem}

In practice, the number of seeds may be much smaller than $n^c$.
However, the result of Theorem 1 is independent of how many seeds are actually formed.
In contrast, it is simple to see that quantum computers can reach a lower complexity if $k_{\text{seed}}$, the number of seeds, scales better than $O(n^c)$.
Indeed, let $\mathbf{O}_{\text{seed}}$ be an operator that, given a state $\ket{j_0, \ldots, j_{c-1}}$, applies a $-1$ phase if $(\mathbf{m}_{0,j_0}, \ldots, \mathbf{m}_{c-1,j_{c-1}})$ corresponds to a valid seed, and leaves it unmarked otherwise.
Within the QRAM model presented in Section \ref{sec:computationalmodel}, we can implement $\mathbf{O}_{\text{seed}}$ with a $\tilde{O}(1)$-sized circuit.
Then, the idea is to use Grover's quantum search algorithm \cite{Grover1996, BoyerSearch} to find the good seeds.
Let
\begin{equation}
\mathbf{G} := \mathbf{H} \cdot \left( 2 \ket{0}\bra{0} - I \right) \cdot \textbf{H} \cdot \mathbf{O}_{seed},
\label{eq:Gdefinition}
\end{equation}
where $\textbf{H}$ is the Walsh-Hadamard transform \cite{NielsenChuang}.
Starting with a uniform superposition over the hit $c$-tuplets, we keep apply $\mathbf{G}$ 
\begin{equation}
O\left(\sqrt{\frac{n^c}{k_{\text{seed}}}}\right)
\end{equation}
times to amplify the probability of sampling a good seed to $\Omega(1)$.
Repeating this $\tilde{O}(k_{\text{seed}})$ times suffices to sample all good seeds with high probability.
We note that this approach works even if $k_{\text{seed}}$ is \emph{a priori} unknown since it can be determined with quantum counting \cite{quantumcounting}.
The detailed steps are presented in Algorithm \ref{algo:quantumseeding}.

\begin{theorem}
Algorithm \ref{algo:quantumseeding} (quantum) generates all seeds with bounded-error probability in  expected time $\tilde{O} \left( \sqrt{k_{\mathrm{seed}} \cdot n^c} \right)$, where $k_{\mathrm{seed}}$ is the total number of generated seeds.
\label{thm:quantumseeding}
\end{theorem}

For the proofs Theorems \ref{thm:classicalseeding} and \ref{thm:quantumseeding} we refer the reader to Appendix \ref{app:seedingalgorithm}.

\subsection{Track building \label{sec:trackbuilding}}

The CTF's track building phase is based on the combinatorial Kalman filter \cite{KalmanFilterFruhwirth} method, which is an adaptation of the Kalman filter \cite{Kalman} for tracking problems.
We build track candidates starting from the seeds by adding new hits, layer by layer, until the end of the detector is reached.
The idea is that, at each step, we extrapolate the candidate trajectory until it intersects the subsequent layer.
Then, to every hit on that layer we attribute a $\chi^2$ value, which essentially translates the distance between the hit and the intersection point of the trajectory with the layer (for detailed definition see Appendix \ref{app:trackbuilding}).
We add the hits with lowest $\chi^2$ value to the track candidate and then update the trajectory's parameters  according to the Kalman filter method.

Crucially, at each propagation step we may branch each track candidate into new candidates if several continuation hits are consistent with the present knowledge of track parameters.
The possibility that the expected hit is simply missing (for example, due to device inefficiencies) also gives rise to a new branch with no hit added at that layer -- this is known as a ``ghost hit''.
Following an argument similar to that of the previous Section (relying on Lemma \ref{thm:almostallevents}), we would expect $\Theta(n)$ new branches to be formed per layer per track candidate. 
However, to prevent a rapid increase in the number of tracks the CTF algorithm imposes a limit of five candidates retained at each step per starting seed.
These five candidates are selected based on their quality score, which is the number of hits added to the track (excluding ghost hits) minus a scaled sum of $\chi^2$ values for each of the hits.

In summary, for each track candidate, the propagation step involves: extrapolating the trajectory onto the next layer ($O(1)$ time), selecting the five best branches ($O(n)$ time), and updating the trajectory parameters for each branch (also $O(1)$ time). 
For more details on the track building stage, we refer the reader to Appendix \ref{app:trackbuilding}.
For a pseudo-code of the track building stage see Algorithm \ref{algo:finding}.

\begin{theorem}
Starting from $k_{\mathrm{seed}}$ seeds, the CTF's track building algorithm, Algorithm \ref{algo:finding} (classical), has complexity $O(k_{\mathrm{seed}} \cdot n)$.
\label{thm:classicalbuilding}
\end{theorem}

With access to a quantum computer, at each propagation step we can prepare all new branches in superposition. 
We propose using D\"{u}rr and H{\o}yer's quantum minimum finding algorithm \cite{DurrMinimum} to select the best five ones in $\tilde{O}(\sqrt{n})$ time, circumventing the classical cost of having to inspect the $O(n)$ branches one-by-one.
In more detail, suppose that we have followed a track candidate up to layer $l-1$. 
Calling the QRAM, we can build a quantum circuit $\mathbf{O}_{\text{find}}$ that, given a state $\ket{l,j} \ket{q}$, applies a $-1$ phase if adding the hit $(j,l)$ yields a new track candidate with $\chi^2$ value smaller than  $y$.
We then perform quantum search with $\mathbf{O}_{\text{find}}$ as the oracle, setting some initial $y \leftarrow y_0$.
If we find a hit with corresponding $\chi^2$ value $y' < y_0$, we run the quantum search again with $y \leftarrow y'$ (see Algorithm \ref{algo:quantumminimumfinding} for the detailed steps).
With high probability, in $\tilde{O}(\sqrt{n})$ this procedure will have converged to the branch with the lowest $\chi^2$ value.

\begin{theorem}
Starting from $k_{\mathrm{seed}}$ seeds, Algorithm \ref{algo:quantumfinding} (quantum) performs track building with bounded-error probability in $\tilde{O} \left( k_{\mathrm{seed}} \cdot \sqrt{n} \right)$ time.
\label{thm:quantumbuilding}
\end{theorem}

The full proofs of Theorems \ref{thm:classicalbuilding} and \ref{thm:quantumbuilding} are provided in Appendix \ref{app:trackbuilding}.

\subsection{Cleaning}

The combinatorial Kalman filter method may yield multiple tracks corresponding to the same particle,  by either starting from different seeds, or when a seed grows into more than one track.
To avoid this, the cleaning stage calculates the fraction of shared hits between all pairs of track candidates
\begin{equation}
\frac{N^{\text{hits}}_{\text{shared}}}{\min \left( N^{\text{hits}}_1, N^{\text{hits}}_2 \right)},
\end{equation}
where $N^{\text{hits}}_1$ ($N^{\text{hits}}_2$) is the number of hits used in forming the first (second) track and $N^{\text{hits}}_{\text{shared}}$ is the number shared hits between the two tracks.
If for any pair this fraction exceeds a fixed threshold value, the worst track  (i.e., the one with the lowest quality score) gets discarded  -- see Algorithm \ref{algo:cleaning}.
This pairwise comparison method leads to a quadratic scaling with the number of track candidates.

\begin{theorem}
CTF's cleaning algorithm, Algorithm \ref{algo:cleaning} (classical), has complexity $O \left( k_{\mathrm{cand}}^2 \right)$, where $k_{\mathrm{cand}}$ is the number of track candidates in the input.
\label{thm:classicalcleaning}
\end{theorem}

Now consider, for simplicity, that all tracks have the same number of hits, say $L$.
We can find an  asymptotically more efficient classical algorithm by observing that (a) there is an integer $r$ (independent of $n$) such that 
two tracks exceed the allowed fraction of shared hits if and only if 
they have $r$ hits in common and (b) each track only has $\binom{L}{r}=O(1)$ distinct $r$-tuples of hits.
We start by sorting the candidate tracks by quality score, such that if we need to discard one of two tracks we choose the one that is further down the list.
Evidently, the first track $t_1$ is going to be included in the output.
We create a self-balancing binary search tree $\mathcal{T}$, like a red-black tree (\cite{IntroAlgs}, for example), containing all of the $r$-tuples of hits of $t_1$ (with some induced order on the $r$-tuples).
We then move to the second track in the list $t_2$.
For every $r$-tuple of  $t_2$, we search for a match in the tree $\mathcal{T}$.
If we do not find any, we insert all of $t_2$'s $r$-tuples into $\mathcal{T}$ and we add $t_2$ to the output.
Otherwise, $t_2$ is not included in the output and we leave the tree unchanged.
We repeat this procedure for the remaining tracks.
In the end, the output contains all the desired tracks.
In Appendix \ref{app:cleaning} we extend this method to the case of general track sizes, resulting in Algorithm \ref{algo:improvedcleaning}, and prove

\begin{theorem}
Algorithm \ref{algo:improvedcleaning} (classical) performs track cleaning in $\tilde{O} \left( k_{\mathrm{cand}} \right)$ time, where $k_{\mathrm{cand}}$ is the number of track candidates in the input.
\label{thm:newcleaning}
\end{theorem}

\subsection{Selection}

The selection routine filters out the track candidates whose quality score falls bellow a specified threshold -- see Algorithm \ref{algo:selection}.
Since the quality score of any track candidate is independent of all the other tracks, this stage clearly scales linearly with the number of track candidates.

\begin{theorem}
CTF's selection algorithm, Algorithm \ref{algo:selection} (classical), has complexity $O(k_{\mathrm{cand}})$, where $k_{\mathrm{cand}}$ is the number of track candidates in the input.
\label{thm:selection}
\end{theorem}

\section{Reconstructing Tracks in Superposition}

We have seen that the seeding stage may exhibit quantum speedup if the number of seeds, $k_{\text{seed}}$, is considerably smaller than the total number of combinations of $c$-tuplets, $n^c$ (Theorem \ref{thm:quantumseeding}).
However, in general, the number of seeds will scale like $\Theta(n^c)$, and the quantum complexity will be the same as the classical one. 
In that case, only the track building stage shows a proven lower complexity: $O(n^{c+1})$ classical (Theorem \ref{thm:classicalbuilding}) versus $O(n^{c + 0.5})$ quantum (Theorem \ref{thm:quantumbuilding}).
If the four stages are run sequentially, the track building stage will dominate the CTF's complexity (both in the classical and quantum cases).

Now suppose that we are only interested in the final reconstructed tracks.
Instead of producing the output of each stage before continuing with the next one, we propose an algorithm, relying on quantum superposition over all track candidates, that reconstructs the full tracks one-by-one.
This further improves the quantum advantage provided that the number of reconstructed tracks is $O(n)$.
Intuitively, this condition means that the CTF can be applied in the asymptotic regime while keeping a constant fraction of tracks that do not correspond to a real charged particle.
We point out that, in practice, particle physicists empirically adjust the parameters of the tracking software according to the luminosity regime to obtain a reasonable fake track rate for most events. Alternatively, one may think that we are only interested in reconstructing the best $O(n)$ tracks.

The promise that only $O(n)$ tracks are to be found among $\Theta(n^c)$ track candidates suggests the use of quantum search, as we did with seeding. 
This is complicated for two reasons: (a)~the track building routine \emph{forgets} information by selecting only some track candidates in each layer, i.e., it is not reversible, while quantum search relies on (reversible) unitary transformations; (b)~the cleaning operation for each track candidate depends on information about other tracks. 
To rectify point (a) we apply the principle of deferred measurements \cite{NielsenChuang} to create a sequence of unitary transformations that mimic the CTF algorithm. 
To rectify point~(b) we adapt our improved cleaning algorithm, coherently accessing the nodes of  the search tree via QRAM.

More concretely, after reconstructing the first $i$ tracks, we build a circuit $\mathbf{U}_i$ that prepares a superposition of all $\tilde{O}(n^c)$ fully built tracks, flagging the ones that have already been accepted as reconstructed tracks.
This circuit only requires $\tilde{O}(\log n^c \cdot \sqrt{n})$ gates using our quantum routine for track building (Section \ref{sec:trackbuilding}).
Then, we sample the best-scoring track candidate in that superposition that does not overlap with any previously reconstructed track candidates to form the $(i+1)$th track in the output.
Using quantum minimum finding, this can be done with $O(\sqrt{n^c})$ calls to  $\mathbf{U}_i$.
We repeat this procedure until no new valid tracks are found -- $O(n)$ times due to the promise.
Our proposal is summarized in Algorithm \ref{algo:reco_in_superposition}.
For a detailed construction of the unitaries $\mathbf{U}_i$ we refer to Appendix \ref{app:reco_supersposition}.

\begin{theorem}
Suppose that the total number of reconstructed tracks is $O(n)$.
Then, Algorithm \ref{algo:reco_in_superposition} (quantum)  outputs the tracks reconstructed by the full CTF algorithm (seeding, track building, cleaning, and selection) with bounded-error probability in $\tilde{O}\left(n^{(c + 3) / 2} \right)$ time.
\label{thm:ctf}
\end{theorem}

\renewcommand{\arraystretch}{1.5}
\renewcommand{\tabcolsep}{0.2cm}

\begin{table*}[t]
\resizebox{\textwidth}{!}{%
\begin{tabular}{ccccc}
\hline\hline
\textbf{Tracking stages}                                                                                & \textbf{Input size}    & \textbf{Output size}& \textbf{Classical complexity}                                                                  & \textbf{Quantum complexity}                                                                               \\ \hline\hline
\textbf{Seeding}                                                                                        & $O(n)$                    & $k_{\text{seed}}$& \begin{tabular}[c]{@{}c@{}}$O\left(n^c\right)$\\ (Theorem \ref{thm:classicalseeding})\end{tabular}                       & \begin{tabular}[c]{@{}c@{}}$\tilde{O}\left(\sqrt{k_{\text{seed}} \cdot n^c }\right)$\\ (Theorem \ref{thm:quantumseeding})\end{tabular}      \\ \hline
\textbf{Track Building}                                                                                 & $k_{\text{seed}}+ O(n)$ & $k_{\text{cand}}$& \begin{tabular}[c]{@{}c@{}}$O(k_{\text{seed}} \cdot n)$\\ (Theorem \ref{thm:classicalbuilding})\end{tabular}  & \begin{tabular}[c]{@{}c@{}}$\tilde{O}\left(k_{\text{seed}} \cdot \sqrt{n}\right)$\\ (Theorem \ref{thm:quantumbuilding})\end{tabular}        \\ \hline
\textbf{Cleaning (original)}                                            &  $k_{\text{cand}}$ & $O(k_{\text{cand}})$& \begin{tabular}[c]{@{}c@{}}$O(k_{\text{cand}}^2)$ \\ (Theorem \ref{thm:classicalcleaning})\end{tabular}       & --                                                                                             \\ \hline
\textbf{Cleaning (improved)}                                            &  $k_{\text{cand}}$ & $O(k_{\text{cand}})$& \begin{tabular}[c]{@{}c@{}}$\tilde{O}(k_{\text{cand}})$ \\ (Theorem \ref{thm:newcleaning})\end{tabular} & --                                                                                             \\ \hline
\textbf{Selection}                                                                                      &  $O(k_{\text{cand}})$ & $O(k_{\text{cand}})$& \begin{tabular}[c]{@{}c@{}}$O(k_{\text{cand}})$\\ (Theorem \ref{thm:selection})\end{tabular}          & --                                                                                             \\ \hline\hline
\textbf{Full Reconstruction}                                                                            & $n$                    & $O(n^c)$& \begin{tabular}[c]{@{}c@{}}$O\left(n^{c + 1}\right)$\\ (Theorems \ref{thm:classicalseeding}, \ref{thm:classicalbuilding}, \ref{thm:newcleaning}, \ref{thm:selection})\end{tabular}  & \begin{tabular}[c]{@{}c@{}}$\tilde{O}\left(n^{c + 0.5}\right)$\\ (Theorems \ref{thm:quantumseeding}, \ref{thm:quantumbuilding}, \ref{thm:newcleaning}, \ref{thm:selection})\end{tabular}               \\ \hline
\textbf{\begin{tabular}[c]{@{}c@{}}Full Reconstruction with\\  $O(n)$ reconstructed tracks\end{tabular}} & $n$                    & $O(n)$& \begin{tabular}[c]{@{}c@{}}$O\left(n^{c + 1}\right)$\\ (Theorems \ref{thm:classicalseeding}, \ref{thm:classicalbuilding}, \ref{thm:newcleaning}, \ref{thm:selection})\end{tabular}  & \begin{tabular}[c]{@{}c@{}}$\tilde{O}\left(n^{(c + 3)/2}\right)$\\ (Theorem \ref{thm:ctf})\end{tabular}                       \\ \hline\hline
\end{tabular}%
}
\caption{
Summary of the results.
We present the complexity of the algorithms for each of the track reconstruction stages, both the classical and quantum versions.
$n$ is the number of charged particles present in the event record, $c$ is the number of hits used to form the seeds, $k_{\text{seed}}$ is the number of seeds generated, and $k_{\text{cand}}$ is the number of built candidate tracks.
The two rows for the track cleaning stage refer to the original version of \cite{JINST} and to the one we propose.
On the quantum side some entries are marked as ``--'' where we did not propose/expect a quantum algorithm with advantage over the classical one.
In the penultimate row we write the complexity of the full track reconstruction, assuming the four stages are executed sequentially.
We combine Theorems \ref{thm:classicalseeding}--\ref{thm:selection} using $k_{\text{seed}} = O(n^c)$ and $k_{\text{cand}}= O(k_{\text{seed}})$.
The final row shows that the quantum advantage can be further improved provided that the number of reconstructed tracks is $O(n)$.
}
\label{tab:summary}
\end{table*}

\section{Conclusions}

We have identified four fundamental routines present in local track reconstruction methods (seeding, track building, cleaning, and selection), and analyzed how each scales with the number of recorded hits, $n$, proposing quantum algorithms where we could find advantage (seeding and track building). 
The seeding stage, which runs on $O(n^c)$ time classically (Theorem \ref{thm:classicalseeding}), has a quantum computational complexity of $\tilde{O}\left(\sqrt{k_{\text{seed}} \cdot n^c } \right)$ (Theorem \ref{thm:quantumseeding}), where $c$ is the number of hits per seed and $k_{\text{seed}}$ is the total number of generated seeds.
Classical track building has complexity $O(k_{\text{seed}} \cdot n)$ (Theorem \ref{thm:classicalbuilding}), whereas we develop a quantum algorithm that scales as $\tilde{O}(k_{\text{seed}} \cdot \sqrt{n})$ (Theorem \ref{thm:quantumbuilding}).
These speedups are based on quantum search routines.
The Combinatorial Track Finder's version of the cleaning routine has complexity $O(k_{\text{cand}}^2)$ (Theorem \ref{thm:classicalcleaning}), where $k_{\text{cand}}$ is the number of processed track candidates, and we show that this can be improved to $\tilde{O}(k_{\text{cand}})$ (Theorem \ref{thm:newcleaning}) via a structured search scheme.
The selection stage was already optimal from the complexity perspective (Theorem \ref{thm:selection}).
If the four stages are run sequentially, the track building routine dominates the complexity of the reconstruction: $O(n^{c+1})$ classically and $\tilde{O}(n^{c + 0.5})$ quantumly.
However, we show that, if the number of reconstructed tracks is $O(n)$, we can combine all previous algorithms to perform track reconstruction in $\tilde{O}\left(n^{(c + 3) / 2} \right)$ time (Theorem \ref{thm:ctf}).
These results are summarized in Table \ref{tab:summary}.
We recall that all of our quantum algorithms assume access to a QRAM storing the classical hit data.

Our work develops a rigorous computational framework to analyze the track reconstruction problem that is both sufficiently abstract to encompass different experiments (like CMS, ATLAS, or Belle II) and sufficiently strong to predict the scaling of the Combinatorial Track Finder algorithm for high luminosity regimes.
Although the reached quantum speedups are evidently mild, we conjecture that they are the best possible while constrained to matching the exact output of the corresponding classical algorithm (up to bounded-error probability) at arbitrarily fine scales.
In other words, the direct quantization of (local) tracking methods may not be the best path to establish a significant advantage in quantum computing for HEP problems.
Instead, one may find more success by breaking the direct correspondence with the classical setting and designing completely new tracking algorithms that inherently take advantage of the features of quantum processors.
There have been proposals in this direction \cite{TrackAnnealers, TrackAnnealingInspired, Tuysuz2021} , but they have not yet shown clear evidence of quantum speedup (refer to Appendix \ref{app:comparison} for details).

In summary, we offer  the first rigorous quantum speedup for relevant HEP data processing tasks.
Moreover, our comprehensive analysis of the Combinatorial Track Finder algorithm reveals that classical improvements to the computational complexity are also possible.
And, even though asymptotic results may be of limited use for practical problems, and quantum hardware may still be far from being able to address big data problems, we hope our original approach to tracking can motivate further investigations on the potential of quantum computation to tackle the increasingly challenging, and potentially intractable classically,  High-Energy Physics data analysis problems.

\newpage

\begin{acknowledgments}
The authors would like to thank Felice Pantaleo for precious discussions about the classical Combinatorial Track Finder algorithm. 
Furthermore, the authors acknowledge  project \textit{QuantHEP – Quantum Computing Solutions for High-Energy Physics}, supported by the EU H2020 QuantERA ERA-NET Cofund in Quantum Technologies, and FCT -- Funda\c{c}\~{a}o para a Ci\^{e}ncia e a Tecnologia (QuantERA/0001/2019). DM, AK, SP, GQ, PB, JS, YO thank the support from FCT, namely through project UIDB/50008/2020. DM acknowledges the support from FCT through scholarship 2020.04677.BD.  MK thanks MikroTik for the scholarship administrated by the UL Foundation. AG has been partially supported by National Science Center under grant agreement 2019/32/T/ST6/00158 and 2019/33/B/ST6/02011. SP thanks the support from the la Caixa foundation through scholarship LCF/BQ/DR20/11790030. GQ thanks the support from FCT through project CEECIND/02474/2018.
\end{acknowledgments}

\appendix

\section{Other quantum algorithms for track reconstruction \label{app:comparison}}

Quantum-based algorithms have been previously developed for track reconstruction and related problems, but none of them show clear evidence of quantum speedup.
\cite{TrackAnnealers, TrackAnnealingInspired} approach tracking as a combinatorial optimization problem to be solved with quantum annealing, and \cite{Tuysuz2021} suggest a hybrid graph neural network model.
References \cite{TrackClusteringAnnealers} and \cite{Quiroz2021} address the related (but not equivalent) problems of track clustering and track recognition, respectively.
One should be cautious before drawing a direct comparison with the currently used tracking algorithms, as is done in this paper.
First, these studies do not guarantee the same or better output quality than the current classical approach.
Second, previous papers do not present a rigorous computational complexity analysis, thus it is unclear how they scale with the number of hits $n$.

In \cite{TrackAnnealers} and \cite{TrackAnnealingInspired} track reconstruction is formulated as a quadratic unconstrained binary optimization (QUBO) problem, which can be naturally mapped to a quantum annealer.
In \cite{TrackAnnealingInspired} the binary variables represent hit doublets $S_{ij}$, and in \cite{TrackAnnealers} hit triplets $S_{ijk}$.
With both approaches the idea is that in the optimal solution the variables assigned with $+1$ correspond to connections between hits left by the same particle.
The computational complexity of the QUBO polynomial formulation is dominated by the calculation of the ``reward'' coefficients (see original references for details),
\begin{gather}
    \sum_{ijk} b_{ijk} S_{ij}S_{jk} \text{ (doublets),}
    \\
    \sum_{ijklp} b_{ijklp} S_{ijk} S_{klp} \text{ (triplets)},
\end{gather}
where the coupling coefficients ($b_{ijk}$ or $b_{ijklp}$) are determined classically.
We see that the amount of coupling coefficients that need to be pre-processed to define the QUBO polynomial representing the full event is $\Theta(n^3)$ for the case of hit doublets 
and $\Theta(n^5)$ for the case of hit triplets.
It is unclear how the annealing time scales with $n$ and the required quality of the solution. The annealing time required to find the global minimum of the QUBO polynomial is likely to scale exponentially with respect to the number of variables in the polynomial \cite[Section~2.5]{TrackAnnealingInspired}.
\cite{TrackAnnealingInspired} consider dividing the problem into smaller sub-problems to enable embedding on current quantum annealers and reduce run-time, but the asymptotic scaling of these sub-problems is not analysed.

\cite{Tuysuz2021} proposed a hybrid classical-quantum graph neural network model.
After classical pre-processing, they get a graph of connections between the hits
of size $O(n^2)$.
Their model predicts the probability of each such connection linking to consecutive hits from the 
same particle.
The complexity of this algorithm will also depend on the number of hidden classical $n_C$ and quantum $n_Q$ dimensions and on the number of training iterations $n_I$.
Although they show that small values of $n_C$, $n_Q$, and $n_I$ are enough to get relatively accurate results for their (toy) datasets, it is difficult to assess whether the proposed algorithm would achieve similar results to the CTF algorithm, and it is unclear whether increasing the size of the network ($n_C$, $n_Q$, $n_I$) would improve the results sufficiently.

\section{More on the tracking problem \label{app:trackingmodel}}

The work presented in the main text is based on a simplified model of the problem of track reconstruction.
Here we describe in more detail the assumptions behind that model.

\textit{Assumptions about the particles' trajectories.}
Let $n$ denote the number of charged particles present in the event record.
Their trajectories originate from a fixed interaction region, but we do not assume that they come from a single collision spot.
The detector is immersed in a quasi-uniform co-axial magnetic field, so we expect the particles to follow helical trajectories aligned with the field's direction.
These trajectories can be described by five parameters \cite{Miao07}.
Let $\mathcal{P} \subset \mathbb{R}^5$ be the five-dimensional cuboid corresponding to the trajectories' parameter space.
Each experiment that produces a sequence of particle trajectories is governed by a physical process that implicitly selects these parameters for each trajectory; in the forthcoming, we call this procedure an \emph{event}.
We model this formally as a random variable $\pi : \Omega \to \mathcal{P}$ that selects a parameter with respect to a probability space $(\Omega, \mathcal{F}, P)$ that accounts for various physical parameters, such as noise, etc..
We make the mild assumption that $\pi$ follows a probability distribution $p_{\pi}$ on $\mathcal{P}$ that is strictly positive.
With our model, an event is generated by drawing $n$ random samples $\pi_1, \ldots, \pi_n$, each following $p_{\pi}$ (that is, we treat them as i.i.d. random variables).

\textit{Assumptions about the detector's layers.}
The detector has a fixed geometry with a discrete set of sensor layers.
We consider that the detector has $L$ cylindrical layers aligned with the beam line.
The layers are indexed from $0$ to $L-1$ (the most inner layers having the smallest indices).
We assume that each particle traverses every layer and that they never return to a previously visited layer. 
We assume the layers to be continuous two-dimensional surfaces $\mathcal{C}_l$, $l \in \{0, \ldots, L-1\}$.

\textit{Assumptions about the hits' data.}
As a particle traverses a layer, it leaves a complex signal resulting from the interaction with the sensor's pixels.
We treat these signals simply as three-dimensional points, called hits.
We assume that the granularity of the sensors is high enough such that each detected hit can be differentiated from other hits. 
Since each trajectory leaves a unique hit on each layer $l$, formally we have a continuous map $H_l : \mathcal{P} \to \mathcal{C}_l$, relating each trajectory to its point of intersection with layer $l$.
At every layer, we identify the hits by labels from $\{0, \ldots, n-1\}$.
We use the notation $\mathbf{m}_{l,j}$ for the coordinates of $j$th hit in layer $l$.
It is possible that some hits are not measured at all due to sensor inefficiencies.
That is, we do not necessarily have a hit $\mathbf{m}_{l,j}$ for every pair $(l,j)$. This means that we may not be able to tell the exact value of $n$ directly from the event record, as it is possible that there is no layer registering all particles. In that case, we would be indexing the hits with labels from $\{0, \ldots, n^*-1\}$, where $n^*$ is the largest number of hits measured in any layer. We consider that $n^*=n$ for simplicity, but every result in this paper would hold the same as long as $n^*=\Theta(n)$.

Under these assumptions, we expect the number of measured hits per layer to grow proportionally to $n$.
Moreover, we can establish the following useful lemma:

\begin{proofT}{Lemma \ref{thm:almostallevents}}
The trajectories are determined by the parameters in $\mathcal{P}$.
If $S \subset \mathcal{C}_l$ is open and non-empty, then $U_S := H_l^{-1}(S) \in \mathcal{P}$ is open and non-empty and by assumption, the probability that a trajectory has parameters in $U_S$ is $p_S := p_{\pi}(U_S) > 0$.
If $\pi_1, \ldots, \pi_n$ are $n$ random samples of parameters drawn without replacement and following $p_{\pi}$, then the strong law of large numbers implies that almost surely, the number of parameters sampled from $U_S$ grows as $\Theta(n)$.
Applying $H_l$ to this gives the growth of the number of hits in $S$.
\end{proofT}

In our model we have omitted some details that, despite being a crucial part of real particle physics experiments, do not significantly influence our complexity analysis.
We now comment on how some of these assumptions could be relaxed in the context of our analysis. 

First, we should note that most local tracking algorithms, including the CTF, reconstruct tracks by multiple iterations.
That is, the sequence of the four computational routines (seeding, track building, cleaning, and selection) is called several times for the same event record.
The idea of this iterative tracking is that the initial iterations search for the tracks that are easiest to find (high transverse momentum, and produced near the interaction region).
After each iteration, the hits associated to the reconstructed tracks are removed, thereby simplifying the subsequent iterations. 
As far as our complexity analysis is concerned, the most significant modification from iteration to iteration is the number of hits used to form seeds.
In \cite{JINST}, the first iteration forms seeds with hit triplets.
But in some subsequent iterations seeds are formed by picking just two hits, as we can use the results of the previous iteration to reconstruct the collision vertices, which serve as the ``third hit''.
In our work, we have analysed the general case of forming seeds with $c$ hits.

Regarding our model for the generation of the trajectories, we assumed that the probability distribution $p_{\pi}$ on parameter space $\mathcal{P}$ was strictly positive.
This was not deduced from explicit particle physics calculations, but is a rather lax assumption that includes the seemingly reasonable assumption that no scattering direction is forbidden.
We've also assumed that an event is generated by drawing $n$ random samples $\pi_1, \ldots \pi_n$, each following $p_{\pi}$.
Underlying this there is the physical assumption that the trajectories of the charged particles are treated as uncorrelated.

We also assumed that each hit can always be differentiated from other hits.
In the $n \rightarrow \infty$ limit, this requires infinite detector's granularity.
But the underlying idea is simply that the experimental precision keeps up with the increase in the beam's luminosity in the sense that we can always distinguish one detection from another.
Indeed, if it were not the case there would be no point in going to very high luminosities.

Another of our assumptions was that the layers were continuous surfaces, as this made the description of the algorithm clearer (especially for the track building stage).
In reality, the layers are formed by overlapping sensor modules.
This means that it is possible for a particle to leave more than one detection per layer.
To accommodate this possibility, at each layer the CTF selects compatible modules, which are the ones whose boundaries are up to a given distance from the predicted measurement.
These modules are divided into module groups in such a way that no two modules in the same group overlap.
Only the best measurement from each group is considered to integrate the track candidate.
Again, we only allow up to five new track candidates per step. 
At each iteration there are $O(1)$ module groups, each with $O(n)$ hits, so the claimed complexity remains the same.

Furthermore, we assumed that the detector was a collection of $L = O(1)$ cylindrical layers.
This is simplified description of real detectors.
For example,
the CMS tracker has a barrel-like shape, with thirteen cylindrical layers aligned with the beam
line and fourteen disk layers in the transverse plane. We did not include the detailed geometry
of the detector in our discussion in order to simplify the exposition. In fact, our analysis holds
for any disposition of layers as long as we assume that each particle only traverses one layer
at a time and that their trajectories do not return to a previously visited layer.

We have considered that the hit data is given in the form of three-dimensional points.
Actually, as a charged particle traverses a layer, it activates multiple sensor pixels.
Then, the signals in neighbouring pixels are grouped together to form three-dimensional clusters.
The centroid of each cluster determines a hit's position.
But the cluster shape also carries information.
In particular, in some cases it is possible to exclude a hit from a given track based on the incompatibility between the hit's cluster shape and the track's trajectory.
We may see this as a motivation to think about the case where $k_{\text{seed}}=O(n^a)$ with $a<c$ -- even though the cluster shape information does not provide a mean to find the good seeds faster, it guarantees that we can recognize them.

In summary, we see that several of our simplifying assumptions could be lifted without changing our conclusions. 
Arguably, the strongest assumption was ignoring the hits' cluster shape information, as that might be used to exclude $k_{\text{seed}}=O(n^c)$ as a worst-case scenario.

\section{The seeding algorithms \label{app:seedingalgorithm}}

\subsection{Classical}

The purpose of the seeding stage is to provide initial track candidates, formed by $c$ hits, and their trajectory parameters.
The CTF algorithm \cite{JINST} generates seeds by selecting $c$-tuplets of hits from the $c$ most inner layers.

We now describe this process in detail.
First, we search over the first two seeding layers of the detector for pairs of hits that are compatible with the seeding criteria.
These criteria include:
\begin{enumerate}
    \item \textit{Minimum transverse momentum.} 
    We impose a minimum value, $p_0$, on the transverse component of the particle's momentum with respect to the direction of the magnetic field, $p_T$.
    Recall that in a uniform magnetic field the trajectory of a particle is an helix aligned with the field and $p_T$ is proportional to the radius of the helix.
    So, $p_T > p_0$ is equivalent to geometric condition that we only accept trajectories with a minimum radius.
    \item \textit{Maximum transverse and longitudinal distance of closest approach to the beam-spot.} 
    The beam-spot, $\mathbf{r}_0$, is the point where we expect that the collisions takes place (this is estimated independently of track reconstruction -- see \cite[Section~6]{JINST}). 
    It does not mean that all trajectories originate from that exact point, but we disregard trajectories that are far away from it.
    More precisely, we enforce that all trajectories cross a cylinder centered at $\mathbf{r}_0$ aligned with the beam axis with radius $\rho_0$ and height $z_0$ (these are the maximum transverse and longitudinal distances, respectively).
\end{enumerate}
For the purposes of our complexity analysis, the most important point about these criteria is that they can be checked for any $c$-tuplet in $O(c)=O(1)$ time.
For each pair of hits $(\mathbf{m}_{0,j_0},\mathbf{m}_{1,j_1})$ that we formed, we search over the third layer for hits compatible with the seeding constraints.
That is, we select a hit $\mathbf{m}_{2,j_2}$ if the trajectory defined by $(\mathbf{m}_{0,j_0},\mathbf{m}_{1,j_1}, \mathbf{m}_{2,j_2})$ satisfies the conditions of (1) and (2) above.
Then, for each hit triplet $(\mathbf{m}_{0,j_0},\mathbf{m}_{1,j_1}, \mathbf{m}_{2,j_2})$, we search over the fourth layer for hits $\mathbf{m}_{3,j_3}$ such that the trajectory that best fits $(\mathbf{m}_{0,j_0},\mathbf{m}_{1,j_1}, \mathbf{m}_{2,j_2}, \mathbf{m}_{3,j_3})$ satisfies the seeding constraints.
This process is repeated until the $c$th layer is reached.
For a for a pseudocode representation see Algorithm \ref{algo:seeding} in Appendix \ref{app:pseudocodes}.

\begin{proofT}{Theorem \ref{thm:classicalseeding}}
Suppose that we have built a seed up to layer $l \in {1, \ldots c-2}$.
Selecting the hits in the $(l+1)$th layer that constitute valid continuations for the seed takes $O(n)$ time: there are $O(n)$ candidate hits and, for each $(l+1)$-tuplet, verifying if it satisfies the seeding criteria takes $O(1)$ time.
So, if there are $N_l$ seeds at layer $l$, iterating over to layer $l+1$ takes $O(N_l \cdot n)$ time.

We now estimate $N_l$.
For each seed that we have built at layer $l$, the seeding constraints define a (non-empty) region over the $(l+1)-$th layer where any hit could be used to continue that seed -- see Figure 3 of the main text.
According to Lemma \ref{thm:almostallevents}, in that region we expect there to be $\Theta(n)$ hits.
So, the number of formed seeds will increase by a multiplicative factor of $\Theta(n)$ when going from layer $l$ to layer $l+1$.
As we start with $n$ seeds at layer $0$, we end up with $N_l = O(n^{l+1})$.

So, the complexity of the algorithm is
\begin{equation}
    n + \sum_{l = 1}^{c-1} N_{l-1} \cdot  O\left( n \right) = O(n^c).
\end{equation}

\end{proofT}

We point out that during the proof of Theorem \ref{thm:classicalseeding} we have also established that Algorithm \ref{algo:seeding} forms at most $\Theta(n^c)$ seeds.

We recall that the version of the CTF presented in \cite{JINST} forms seeds with just three hits.
Theorem \ref{thm:classicalseeding} applies to that case by making $c=3$.
We are trying to fit the tracks into helices aligned with the detector axis.
So, we could not fix the five parameters defining such helices with fewer than three points without critically relying on the estimate for the beam-spot.
Actually, in general three points determine a countable family of such helices. 
If we assume that the trajectories do not realize ``a full turn'' between these points, this degeneracy is broken. However, then we do not have the guarantee that there is an helix passing exactly through the three points. 
In practice, as there are experimental uncertainties about the hits' positions, this is not a concern.

The seeding strategy that we have described was used at tracking programme of the CMS collaboration during the 2016 LHC  run \cite{Sguazzoni2016}.
Meanwhile, CMS has adopted a different seeding method \cite{Bocci2020}, based on the concept of cellular automata.
This more sophisticated algorithm yields seeds of various sizes, and already includes a cleaning and selection phases.
Perhaps more importantly, its design is extremely parallelizable.
This means that it can become very efficient in terms of speed, scaling better than the corresponding computational complexity.

\subsection{Quantum}

Consider a unitary transformation $\mathbf{U}_{\text{seed}}$ that recognizes if a hit $c$-tuplet forms a valid seed.
That is, given a state $\ket{ \mathbf{m}_{0,j_0}, \ldots, \mathbf{m}_{c-1,j_{c-1}} }$, $\mathbf{U}_{\text{seed}}$ applies a $-1$ phase if $(\mathbf{m}_{0,j_0}, \ldots, \mathbf{m}_{c-1,j_{c-1}})$ passes the seeding stage, and otherwise the state is left unchanged.
Since any classical computation can be simulated by a quantum computer \cite{NielsenChuang}, this is clearly possible.
Moreover, because we can recognize if a hit $c$-tuplet constitutes a valid seed with a $O(1)$-sized circuit, we can also build a quantum circuit for $\mathbf{U}_{\text{seed}}$ using $O(1)$ gates.
Using $\mathbf{U}_{\text{seed}}$ and $\mathbf{Q}$, it is straightforward to form a unitary transformation $\mathbf{O}_{\text{seed}}$  acting on $\{\ket{0}, \ket{1}\}^{\otimes 3 \log n}$ (possibly along some ancillary qubits)
that marks the state $\ket{j_0, \ldots, j_{c-1}}$ with a $-1$ phase if the corresponding hit $c$-tuplet constitutes a good seed.
If any of the pairs of indices $(0, j_0), \ldots, (c, j_{c-1})$ does not correspond to a hit, we assume that $\mathbf{O}_{\text{seed}}$ leaves the state $\ket{ j_0, \ldots, j_{c-1} }$ unchanged.
We can run the circuit for $\mathbf{O}_{\text{seed}}$  in $O(\log(n))$ time.

We start by preparing all $c$-tuplets in superposition.
For simplicity, we assume that $n$ is a power of two.
Starting from the all-zero state, we can do this by applying $c \log n$ parallel  single-qubit Hadamard gates (also known as the Walsh-Hadamard transform, $\mathbf{H}$).
Now define $\theta$ and $m$ as
\begin{equation}
    \theta = \arcsin \left( \sqrt{ \frac{k_{\text{seed}}}{n^c} } \right) , \quad
    m = \left\lfloor \frac{\pi}{4 \theta} \right\rfloor.
\label{eq:rdefinition}
\end{equation}
From Grover's algorithm, 
\begin{theorem}[Quantum search \cite{Grover1996, BoyerSearch}]
Let $m$ and $\mathbf{G}$ be defined as in \eqref{eq:rdefinition} and \eqref{eq:Gdefinition}, respectively.
Then, if we measure the state
\begin{equation}
    \mathbf{G}^m \cdot \left( \frac{1}{\sqrt{n^c}} \sum_{j_0, \ldots, j_{c-1} = 0}^{n-1}  \ket{ j_0, \ldots, j_{c-1} } \right)
    \label{eq:groversalgorithm}
\end{equation}
in the computational basis we will find a good seed (i.e., a $c$-tuplet $(j_0, \ldots, j_{c-1})$ such that $(\mathbf{m}_{0,j_0}, \ldots,$ $\mathbf{m}_{c-1,j_{c-1}})$ passes the seeding stage) with probability at least $1/2$.
\label{thm:grover}
\end{theorem}

Preparing the state \eqref{eq:groversalgorithm} involves calling the operator $\mathbf{G}$ $m=O(\sqrt{n^c/k_{\text{seed}}})$ times, representing a complexity advantage over what we could do classically.
Note, however, that applying Grover's algorithm requires determining $m$, which we cannot do since we do not know \emph{a priori} what is the value of $k_{\text{seed}}$.
For that purpose, we can use the quantum counting algorithm of Brassard, H{\o}yer, and Tapp \cite{quantumcounting}:

\begin{theorem}[Quantum counting \cite{quantumcounting}]
There is a quantum algorithm that outputs $k_{\mathrm{seed}}$ with probability at least $3/4$, using an expected number of $O(\sqrt{n^c k_{\mathrm{seed}}})$ calls to $\mathbf{G}$.
\label{thm:quantumcounting}
\end{theorem}

\noindent
Combining these techniques, we propose performing seeding with Algorithm \ref{algo:quantumseeding}.

\begin{proofT}{Theorem \ref{thm:quantumseeding}}
Assume that quantum counting succeeds, that is, we have correctly estimated $\tilde{k}=k_{\text{seed}}$ in step \ref{step:counting} in $\tilde{O}(\sqrt{n^c k_{\text{seed}}})$ time.
The probability of sampling a \emph{new} good seed in step \ref{step:groverstate} after having already found $k$ of them is (Theorem \ref{thm:grover})
\begin{equation}
    \frac{1}{2} \frac{k_{\text{seed}} - k}{k_{\text{seed}}}.
\end{equation}
Then, determining the expected time to find all good seeds is equivalent to the coupon collector's problem.
In particular, the probability that we run step \ref{step:groverstate} more than $10 k_{\text{seed}} \log k_{\text{seed}}$ times is less than $1/4$.
That is, with probability at least $3/4$ we spend $\tilde{O}(\sqrt{n^c \cdot k_{\text{seed}}})$ time on loop \ref{step:seedwhile}-\ref{step:seedoutput}.
Since quantum counting succeeds with probability at least $3/4$ (Theorem \ref{thm:quantumcounting}), Algorithm \ref{algo:quantumseeding} outputs all seeds in $\tilde{O}(\sqrt{n^c \cdot k_{\text{seed}}})$ time with probability no less than $1/2$.
\end{proofT}

\noindent
If $k_{\text{seed}} = O(n^a)$, then we can perform seed generation up to bounded error with complexity
\begin{equation}
    \tilde{O} \left( n^{\frac{c + a}{2}} \right).
\end{equation}
In the worst-case scenario, this shows no advantage over the classical algorithm (as is expected from Lemma \ref{thm:almostallevents}).
But for any $a<c$ we reach a lower complexity than the classical seeding (Theorem \ref{thm:classicalseeding}).

\section{The track building algorithms \label{app:trackbuilding}}

\subsection{Classical \label{sec:trackbuildingclassical}}

The track building stage extrapolates the seeds’ trajectories along the expected path of the particle and builds track candidates by adding compatible hits from successive detector layers, updating the parameters at each layer.
More precisely, the track building strategy is based on the combinatorial Kalman filter \cite{KalmanFilterFruhwirth}, which in turn is an adaptation of the Kalman filter \cite{Kalman} for tracking problems.
We now describe this method.

Say that a trajectory at layer $l-1$ is described by a five vector $\mathbf{p}_{l-1}$.
The propagated state vector $\mathbf{p}_l$ at next layer is modelled by the system equation
\begin{equation}
\mathbf{p}_l =  \mathbf{F}_l \mathbf{p}_{l-1} + \mathbf{w}_l.
\label{eq:process}
\end{equation}
$\mathbf{F}_l$, known as the process matrix, describes the propagation of a charged particle in a uniform magnetic field from layer $l-1$ to $l$.
$\mathbf{w}_l$ is a random variable called process noise.
A measurement $\mathbf{m}_l$ at layer $l$ is given by
\begin{equation}
\mathbf{m}_l =  \mathbf{H}_l \mathbf{p}_{l-1} + \mathbf{e}_l,
\label{eq:measurement} 
\end{equation}
where $\mathbf{H}_l$ is the measurement matrix and $\mathbf{e}_l$ is the measurement noise.
We assume that we know the covariance matrices for the process and measurement noises.
Note that, in general, we could replace equations \eqref{eq:process} and \eqref{eq:measurement} by non-linear relations.
But the linear model usually suits the purpose of track reconstruction.

Suppose that we have built a track up to layer $l-1$ with the measurements (i.e., hits) $\mathbf{m}_{0, j_0}, \mathbf{m}_{1, j_1}, \ldots,$ $\mathbf{m}_{l-1,j_{l-1}}$.
With this information, we describe our prediction of the trajectory at this layer by a state vector $\mathbf{p}_{l-1|l-1}$ and corresponding covariance matrix.
Without knowing which hit from layer $l$ belongs to this track, we predict that the state vector at layer $l$ is
\begin{equation}
\mathbf{p}_{l|l-1} = \mathbf{F}_l \mathbf{p}_{l-1|l-1}.
\label{eq:predictedstate}
\end{equation} 
We say that the predicted measurement at layer $l$ is
\begin{equation}
\mathbf{m}_{l|l-1} = \mathbf{H}_l \mathbf{p}_{l|l-1}.
\label{eq:predictedmeasurement}
\end{equation}
This would be the location of the $l$th hit if we had perfect knowledge of the trajectory and there were no process/measurement errors.
In reality, we do not expect to find any hit exactly at this predicted measurement.
When considering an actual measurement $\mathbf{m}_{l,j}$, we say that the predicted residual is
\begin{equation}
\mathbf{r}_{l|l-1} \left( \mathbf{m}_{l,j} \right) = \mathbf{m}_{l,j} - \mathbf{m}_{l|l-1}
\end{equation}
The predicted $\chi^2$ value is defined as
\begin{equation}
\chi^2_{l|l-1} \left( \mathbf{m}_{l,j} \right) = \mathbf{r}_{l|l-1} \left( \mathbf{m}_{l,j} \right)^T \mathbf{R}_{l|l-1}^{-1} \mathbf{r}_{l|l-1} \left( \mathbf{m}_{l,j} \right),
\label{eq:predictedchi2}
\end{equation}
where $\mathbf{R}_{l|l-1}$ is the covariance matrix of the predicted residual.
Intuitively, a high $\chi^2$ value tells us that the measurement is unlikely to belong to the track.
Then, when evaluating which hit to add to the track, only the ones whose predicted $\chi^2$ value is below some fixed threshold $\chi_0^2$ pass to the filtering phase.
Suppose that $\mathbf{m}_{l,j}$ satisfies this criterion.
Based on this measurement, we update the state vector prediction to
\begin{equation}
\mathbf{p}_{l|l} = \mathbf{p}_{l|l-1} + \mathbf{K}_l \mathbf{r}_{l|l-1} \left( \mathbf{m}_{l,j} \right),
\end{equation}
where $\mathbf{K}_l$ is the Kalman gain matrix, which is calculated based on the covariance matrices of state vector, the process noise and the measurement noise (see \cite{KalmanFilterFruhwirth} for explicit expression).
We say that the filtered residual for this measurement is
\begin{equation}
\mathbf{r}_{l|l} \left( \mathbf{m}_{l,j} \right) = \mathbf{m}_{l,j} - \mathbf{H}_l \mathbf{p}_{l|l}.
\end{equation}
The filtered $\chi^2$ value is
\begin{equation}
\chi^2_{l|l} \left( \mathbf{m}_{l,j} \right) = \mathbf{r}_{l|l} \left( \mathbf{m}_{l,j} \right)^T \mathbf{R}_{l|l}^{-1} \mathbf{r}_{l|l} \left( \mathbf{m}_{l,j} \right),
\label{eq:filterchi2}
\end{equation}
$\mathbf{R}_{l|l}$ being the covariance matrix of the filtered residual.
One can show that the predicted and filtered $\chi^2$ values are actually identical (see \cite{KalmanFilterFruhwirth}), that is,
\begin{equation}
\chi^2_{l|l-1} \left( \mathbf{m}_l \right)
=
\chi^2_{l|l} \left( \mathbf{m}_l \right)
, \forall \mathbf{m}_l \in \mathbb{R}^3.
\end{equation}
This means that we can determine the filtered $\chi^2$ value without explicitly updating the trajectory.
The total $\chi^2$ value of the track at layer $l$ is the sum of the filtered (or predicted) $\chi^2$ values from all previously visited layers
\begin{equation}
\chi^2_{\leq l} 
(\mathbf{m}_{0, j_0},  \ldots, \mathbf{m}_{l,j_l})
= \sum_{i=0}^l \chi^2_{i|i} (\mathbf{m}_{i, j_i}).
\end{equation}

In general, we may have several hits passing to the filtering phase.
As we are not sure which one truly belongs to the track, we form new candidate tracks each including a different hit.
These tracks are then followed independently.
Also, to accommodate the possibility of detection inefficiencies the CTF permits adding a ``ghost hit'' if no suitable hit is found.
However, to avoid a rapid increase in the number of tracks, we impose a limit of $\lambda$ tracks retained at each step (the default in \cite{JINST} being $\lambda=5$).
If at any point this limit is surpassed, we abandon the worst tracks.
To decide this, each track candidate is attributed a quality score $q_l$ of the form
\begin{equation}
q_l = l - m_{ghost} -  \omega \cdot \chi^2_{\leq l},
\label{eq:qualityscore}
\end{equation}
where $m_{ghost}$ is the number of ghost hits included in the track and $\omega$ is some configurable weight (we omitted the dependence on the measurements).
At any step we can discard a candidate track if it contains too many ghost hits or the total $\chi^2$ value exceeds a given threshold.
Otherwise, the procedure is continued until the end of the detector is reached (that is, we arrive at $l=L-1$).
The quality score \eqref{eq:qualityscore} at that point is said to be the quality score of the track candidate. 
The tracks that reach this step are accepted for the next stage of the CTF algorithm.
The steps of the track building stage are summarized in Algorithm \ref{algo:finding}.

Before proving Theorem \ref{thm:classicalbuilding}, it is important to understand how many candidate hits pass to the filtering phase.
The space of points with acceptable predicted $\chi^2$ value (equation \eqref{eq:predictedchi2})
\begin{equation}
\{ \mathbf{m_l} \in \mathbb{R}^3 : \chi^2_{l|l-1} \left( \mathbf{m}_l \right) < \chi_0^2 \}
\end{equation}
is an ellipsoid around the predicted measurement.
The intersection of this ellipsoid with the layer's surface yields a region in that layer whose area is independent of $n$.
By Lemma \ref{thm:almostallevents}, we may find $\Theta(n)$ hits in that region.
Therefore,
\begin{lemma}
The filtering step takes $O(n)$ time.
\label{lem:filteringtime}
\end{lemma}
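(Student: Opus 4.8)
The plan is to bound, for a fixed candidate track and a fixed layer $l$, the cost of the two things the filtering step does: (i) deciding which hits of layer $l$ are compatible with the current prediction, and (ii) spawning a new candidate track for each compatible hit (plus a single ghost-hit track if none is compatible). For (i), I would first note that by the modelling assumptions each of the $n$ particles leaves at most one hit per layer, so layer $l$ holds $O(n)$ hits; we scan them all, and for each hit $\mathbf{m}_{l,j}$ the test $\chi^2_{l|l-1}(\mathbf{m}_{l,j}) < \chi_0^2$ of \eqref{eq:predictedchi2} is a product of vectors and matrices of fixed dimension ($3$ and $5$), hence $O(1)$ — and, crucially, since the filtered and predicted $\chi^2$ values coincide, this test needs only the prediction, not an actual state update. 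So (i) is $O(n)$.

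For (ii) I would bound the number of hits that actually pass the threshold. By \eqref{eq:predictedchi2} the set $\{\mathbf{m}_l : \chi^2_{l|l-1}(\mathbf{m}_l) < \chi_0^2\}$ is the interior of an ellipsoid centred at $\mathbf{m}_{l|l-1}$ whose semi-axes are fixed multiples of $\sqrt{\chi_0^2}$ and the square roots of the eigenvalues of $\mathbf{R}_{l|l-1}$; its intersection with the fixed surface $\mathcal{C}_l$ is a region of area bounded independently of $n$. Covering $\mathcal{C}_l$ by finitely many fixed open patches of small area and applying Lemma \ref{lem:fixedarea} to each, such a region contains $O(n)$ hits (and $\Theta(n)$ in the worst case). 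Hence at most $O(n)$ new candidate tracks are created, and each creation — computing $\mathbf{p}_{l|l}$, evaluating the quality factor from \eqref{eq:qualityscore}, appending to the list — is again a constant-size computation. The ghost-hit branch contributes $O(1)$. Summing, the filtering step for one track at one layer is $O(n) + O(n)\cdot O(1) + O(1) = O(n)$.

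The step I expect to need the most care is the claim that the ellipsoid's size is independent of $n$, i.e. that $\mathbf{R}_{l|l-1}$ is bounded uniformly in $n$. This follows because $\mathbf{R}_{l|l-1}$ is built by the Kalman recursion from the propagated state covariance, which in turn is produced from the three-hit seed fit and a fixed number $L$ of propagations using the fixed process- and measurement-noise covariances and the fixed detector geometry — none of which depends on $n$ — but it deserves an explicit sentence rather than being taken for granted. A secondary, purely technical point is that Lemma \ref{lem:fixedarea} is stated for one fixed open set whereas our ellipsoidal acceptance window translates and deforms with the track; the finite-cover argument above circumvents this, using only that the window's area is uniformly bounded.
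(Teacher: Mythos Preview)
Your proof is correct and follows essentially the same route as the paper: the acceptance region $\{\mathbf{m}_l : \chi^2_{l|l-1}(\mathbf{m}_l) < \chi_0^2\}$ is an ellipsoid whose intersection with the layer surface has $n$-independent area, so by Lemma~\ref{lem:fixedarea} it contains $\Theta(n)$ hits, and the per-hit work is $O(1)$. You are in fact more thorough than the paper's brief argument, explicitly justifying why $\mathbf{R}_{l|l-1}$ is bounded uniformly in $n$ and flagging the finite-cover subtlety in applying Lemma~\ref{lem:fixedarea} to a window that moves with the track---points the paper simply takes for granted.
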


\begin{proofT}{Theorem \ref{thm:classicalbuilding}}
Starting from a single seed, we only propagate up to $\lambda=O(1)$ tracks from layer to layer.
For each of these, the analytical continuation of the trajectory from one layer to another (equations~\eqref{eq:predictedstate} and \eqref{eq:predictedmeasurement}) is performed in $O(1)$ time.
As we have seen with Lemma \ref{lem:filteringtime}, performing filtering requires $O(n)$ time per track candidate.
Finally, in $O(n)$ time we can determine the $\lambda$ tracks with best quality score \eqref{eq:qualityscore} that are propagated to the next layer.
The number of layers $L$ is $O(1)$.
Combining everything, we reach the find a complexity of $O(k_{\text{seed}} \cdot n)$.
\end{proofT}

\subsection{Quantum}

We have seen that, at each step of the track building stage, $O(n)$ hits undergo the filtering step, while only at most $\lambda=O(1)$ of them form new track candidates.
Our idea is to use quantum search to perform filtering, reducing its complexity from $O(n)$ (Lemma \ref{lem:filteringtime})to $\tilde{O}(\sqrt{n})$.

Suppose that we have followed a track up to layer $l-1$ according to the track building method described in Section \ref{sec:trackbuildingclassical}.
In particular, we have evaluated the predicted state vector and corresponding covariance matrix.
Based on this information, we can calculate predicted $\chi^2$ value (equation \eqref{eq:predictedchi2}) for any measurement in layer $l$ in $O(1)$ time.
Let $\mathbf{O}_{\chi}$ be a unitary transformation that, given the index of a measurement, calculates the predicted $\chi^2$ value of adding that measurement to the track:
\begin{equation} 
    \mathbf{O}_{\chi}
    \ket{l,j}  
    \ket{ x } 
    = 
    \ket{l,j}  
    \ket{x \oplus  \chi^2_{l|l-1}(\mathbf{m}_{l,j}) }.
\end{equation}
Like in the seeding algorithm, we can build a quantum circuit for $\mathbf{O}_{\chi}$ using the classical circuit to compute the predicted $\chi^2$ value and the QRAM operator $\mathbf{Q}$, requiring a total of $O(\log n)$ gates.
Using $\mathbf{O}_{\chi}$ we can build a quantum circuit $\mathbf{O}_{\text{find}}$ that marks a state $\ket{l,j}\ket{y}$ if $\chi^2_{l|l-1}(\mathbf{m}_{l,j})$ is smaller than the threshold $y$
\begin{equation}
\mathbf{O}_{\text{find}} \ket{l,j} \ket{y} 
=
\begin{cases}
- \ket{l,j} \ket{y}
, &\text{if } \chi^2_{l|l-1}(\mathbf{m}_{l,j}) < y \\
+ \ket{l,j} \ket{y}
, &\text{otherwise.} 
\end{cases}
\end{equation}
By the quantum minimum finding algorithm of D\"{u}rr and H{\o}yer \cite{DurrMinimum}, we can find the measurement $\mathbf{m}_{l,j}$ that minimizes $\chi^2_{l|l-1}$ with $O(\sqrt{n})$ calls to $\mathbf{O}_{find}$:

\begin{theorem}[Quantum minimum finding \cite{BoyerSearch}]
If there is a measurement $\mathbf{m}_{l,j}$ such that $\chi^2_{l|l-1}(\mathbf{m}_{l,j}) < \chi^2_0$, Algorithm \ref{algo:quantumminimumfinding} finds the measurement that minimizes $\chi^2_{l|l-1}$ with probability at least $1/2$ in $\tilde{O}(\sqrt{n})$ time.
\label{thm:quantumminimumfinding}
\end{theorem}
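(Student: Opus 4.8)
The plan is to reduce the statement to the analysis of the D\"{u}rr--H{\o}yer quantum minimum finding algorithm \cite{DurrMinimum}, which is built on the exponential (BBHT) search of Boyer et al.\ \cite{BoyerSearch}, and then to track the extra logarithmic factors coming from our QRAM-based oracle. First I would settle the time bound, which is independent of correctness. Each application of $\mathbf{O}_{find}$ is implemented from the QRAM operator $\mathbf{Q}$ together with the $O(1)$-sized classical circuit that computes $\chi^2_{l|l-1}$ from a hit's coordinates, hence uses $O(\log n)$ gates; preparing the uniform superposition $\frac{1}{\sqrt{n}} \sum_{j} \ket{l,j}$ costs $\log n$ Hadamards, and reading off a sampled index and re-evaluating its $\chi^2$ value classically costs another $O(\log n)$. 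Since the \texttt{while} loop of Algorithm \ref{algo:quantumminimumfinding} halts as soon as $\mathbf{O}_{find}$ has been queried $22.5\sqrt{n}$ times, the whole procedure runs in $22.5\sqrt{n}\cdot O(\log n) = \tilde{O}(\sqrt{n})$ gates.

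Next I would establish correctness. Relabel the hit indices of layer $l$ so that $\chi^2_{l|l-1}(\mathbf{m}_{l,j_1}) \le \chi^2_{l|l-1}(\mathbf{m}_{l,j_2}) \le \cdots$. The hypothesis that some measurement has $\chi^2_{l|l-1} < \chi^2_0$ guarantees that the initial threshold $y = \chi^2_0$ marks a non-empty set, so the exponential search is always run on a non-vacuous instance and can only decrease $y$. Now recall the two facts underlying the D\"{u}rr--H{\o}yer argument: (i) when the current incumbent $j_0$ has rank $r$ in the sorted order, exactly $t = r-1$ indices are still marked, a single BBHT invocation returns one of them uniformly at random, and it uses $O(\sqrt{n/t})$ queries to $\mathbf{O}_{find}$ in expectation \cite{BoyerSearch}; and (ii) by the symmetry of random permutations, the probability that rank $r$ is ever adopted as the incumbent equals $1/r$ (it is seen by $j_0$ precisely when it is the smallest among the ranks $\{1,\dots,r\}$ encountered so far). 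Combining (i) and (ii) by linearity of expectation, the expected total number of calls to $\mathbf{O}_{find}$ before $j_0$ reaches rank $1$ is $\sum_{r=2}^{n} \frac{1}{r}\, O\!\left(\sqrt{n/(r-1)}\right) = O(\sqrt{n})$; D\"{u}rr and H{\o}yer pin the constant so that the expected budget is below $22.5\sqrt{n}$, and Markov's inequality then gives that with probability at least $1/2$ the allotted budget is \emph{not} exhausted before $j_0$ becomes the true minimizer. In that event the returned index $j_0$ is non-empty and equals the index minimizing $\chi^2_{l|l-1}$, as claimed; I would cite \cite{DurrMinimum} for the precise constant rather than reproduce the bookkeeping.

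The main obstacle is exactly this last probabilistic bookkeeping: because BBHT's query cost is controlled only in expectation (not per round), one cannot simply cap the cost round-by-round, and one must instead bound the expected \emph{total} query count by pairing the "record" probabilities $1/r$ with the $O(\sqrt{n/t})$ costs and then choose the global cutoff $22.5\sqrt{n}$ so that Markov's inequality yields failure probability at most $1/2$. All of this is the content of the correctness proof in \cite{DurrMinimum}; the only genuinely new ingredients here are the observation that the $\chi^2_0$-feasibility hypothesis makes the starting threshold non-vacuous, and the remark that the $O(\log n)$ gate cost of $\mathbf{O}_{find}$ upgrades the $O(\sqrt{n})$ query bound to the stated $\tilde{O}(\sqrt{n})$ gate bound.
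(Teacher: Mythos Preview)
Your proposal is correct, but note that the paper does not actually prove this theorem: it is stated as a cited result (the D\"{u}rr--H{\o}yer minimum finding theorem, applied with the $\chi^2$ oracle built from QRAM), with no accompanying \texttt{proof} environment. What you have written is a faithful and accurate unpacking of the D\"{u}rr--H{\o}yer analysis \cite{DurrMinimum}, supplemented by the two observations that are specific to this context: that the $\chi^2_0$ hypothesis ensures the initial marked set is non-empty, and that the $O(\log n)$ gate cost of $\mathbf{O}_{find}$ converts the $O(\sqrt{n})$ query bound into a $\tilde{O}(\sqrt{n})$ gate bound. One minor point worth flagging: the ``probability $1/r$ that rank $r$ is ever the incumbent'' argument in D\"{u}rr--H{\o}yer assumes a uniformly random initial index, whereas Algorithm~\ref{algo:quantumminimumfinding} starts from the fixed threshold $y=\chi_0^2$; this is harmless (starting from a threshold above which only $k\le n$ elements lie can only \emph{reduce} the expected query count, since ranks $r>k$ are never incumbents), but you may want to remark on it explicitly rather than fold it silently into the citation.
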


With this result, our strategy for track building becomes the following.
Starting from a single seed, we do track finding by propagating up to $\lambda$ tracks from layer to layer.
For each of these tracks, we apply quantum minimum finding $\lambda$ times to find the $\lambda$ measurements with lowest predicted $\chi^2$ value (after we have found a minimum of $\chi^2_{l|l-1}$ we can arbitrarily increase the $\chi^2$ value of that measurement to ensure that we do not find it again in the following run of quantum minimum finding).
Out of the up to $\lambda^2$ resulting track candidates, we select the $\lambda$ ones with best quality score and continue propagating those.
Note that this implies applying quantum minimum finding up to $L \lambda^2 n^c$ times, which means that the probability of correctly reproducing the result of the classical track building decreases with $n$.
Fortunately, we can make the probability of success bounded by always repeating the quantum minimum finding routine $O(\log n)$ times.
We propose doing track building as in Algorithm \ref{algo:quantumfinding}.

\begin{proofT}{Theorem \ref{thm:quantumbuilding}}
In Algorithm \ref{algo:quantumfinding}, instead of looping over the candidate measurements at each layer (line \ref{step:hitloop} in Algorithm \ref{algo:finding}), we find the best measurements with a quantum minimum finding routine.
We stop after having selected $\lambda$ measurements per candidate track as we know that only up to $\lambda$ tracks are kept at each layer (per seed).
Each run of quantum minimum finding takes $\tilde{O}(\sqrt{n})$ time -- Theorem \ref{thm:quantumminimumfinding}.
So, the result holds as long as we show that the probability of success is bounded by $1/2$.
The probability that we fail to select the best available measurement in steps \ref{step:qmflogtimes}-\ref{step:measurementselect} is upper bounded by
\begin{equation}
    \frac{1}{3 L \lambda^2 n^c}.
\end{equation}
Then, the probability that we do not fail any of the $L \lambda^2 n^3$ times we run steps \ref{step:qmflogtimes}-\ref{step:measurementselect} is lower bounded by
\begin{equation}
    \left(1 - \frac{1}{3 L \lambda^2 n^c}\right)^{L \lambda^2 n^c} 
    \geq
    \frac{2}{3}.
\end{equation}
\end{proofT}

\section{The cleaning algorithms \label{app:cleaning}}

\begin{figure*}[t]
\includegraphics[width=0.8\linewidth]{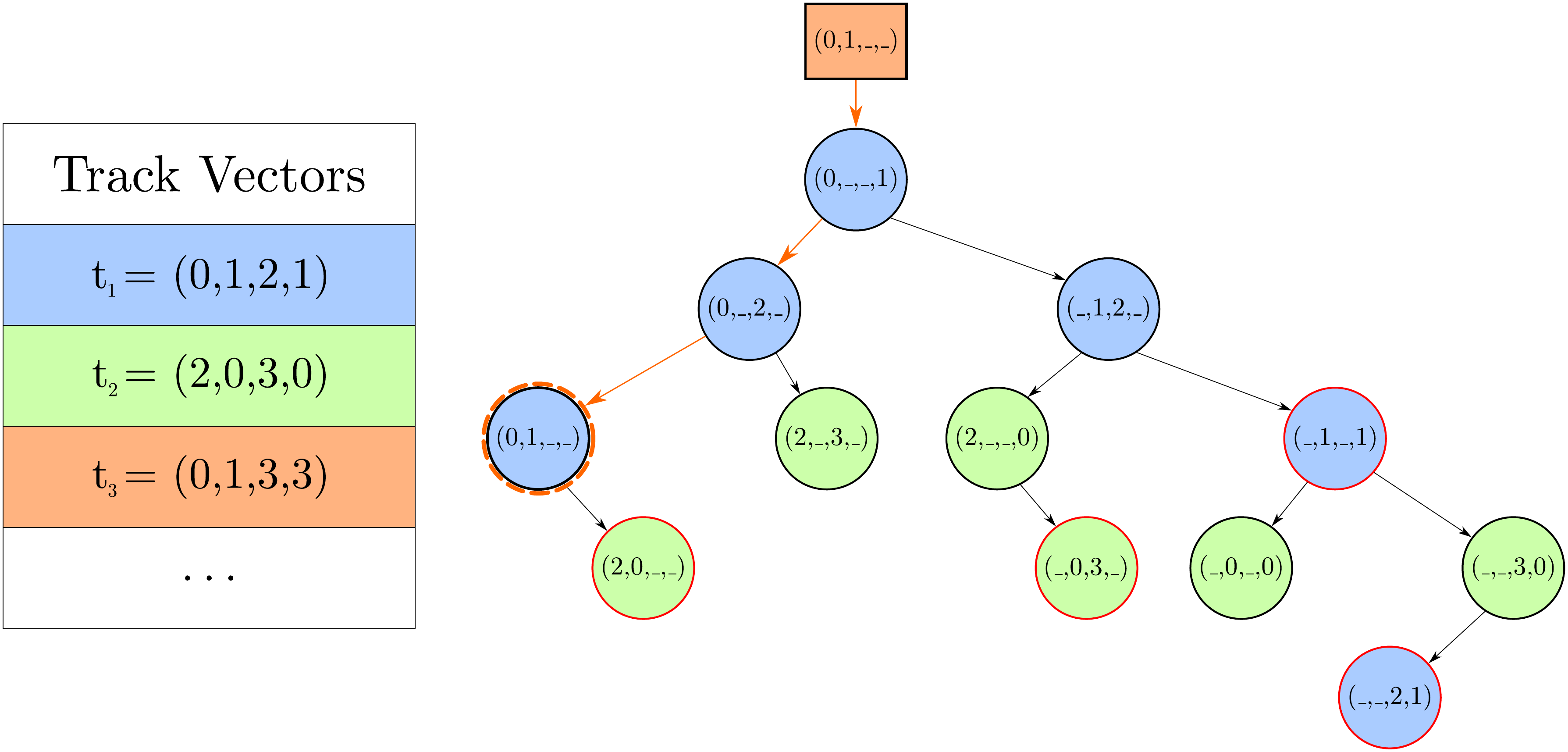}
\caption{Cleaning with $r$-tuples tree.
For this example, the first three elements of the sorted list of track vectors are $t_1 = (0,1,2,1)$, $t_2=(2,0,3,0)$, and $t_3=(0,1,3,3)$.
Suppose we want to exclude tracks that share two or more hits.
We have build a red-black tree with the $2$-tuples of $t_1$ and $t_2$ (blue and green circles, respectively).
The line of the circles is red or black according to the colour of the corresponding node (see \cite{IntroAlgs} for construction of red-black trees).
In this illustration, we are searching for $2$-tuples of $t_3$ in the tree.
We see that $2$-tuple $(0,1,\_,\_)$ is already present in the tree -- the path with orange leads to a node with that $2$-tuple.
So, $t_3$ is not going to be included in the output.
}
\label{fig:redblacktree}
\end{figure*}

CTF's cleaning algorithm compares every pair of tracks coming from the finding stage.
This approach does not take into account any structure of the tracks.
Indeed, it would work the same if instead of calculating the fraction of shared hits we were calling a black box that outputted ``clean/not clean'' when given two tracks.
We now present a different way to perform cleaning that takes better advantage of the structure of the problem.

We begin by reviewing the case where all the candidate tracks have $L$ hits, that is, each candidate track contains exactly one hit per layer.
Then, each candidate track can be uniquely identified with a vector in $\{0, \ldots, n-1\}^L$.
As an example, if $L=4$ and a given track $t$ contains the zeroth hit from the first layer, the second hit from the second layer, the fourth hit from the third layer, and the fourth hit from the fourth layer, its corresponding \emph{track vector} is $(0,2,4,4)$.
With $f$ being the maximum allowed fraction of shared hits, define $r= \lceil f L \rceil$.
We say a vector of length $L$ is an $r$-tuple of a track if it is equal to the track vector at $r$ entries and contains the symbol ``$\_$'' at the others.
For example, $(0,2,\_,\_)$ and $(0,\_,4,\_)$ are $2$-tuples of the track $t$ mentioned above.
Note that there are $\binom{L}{r}=O(1)$ such $r$-tuples.
Two tracks exceed the allowed fraction of shared hits if and only if they have (at least) $r$ hits in common, that is, if they have a matching $r$-tuple.

The algorithm starts by sorting the candidate tracks by quality score.
This way, if we need to discard one of two tracks we choose the one that is further down the list.
We then iterate over the sorted tracks.
Evidently, the first track $t_1$ is going to be included in the output.
We create a self-balancing binary search tree $\mathcal{T}$ (like a red-black tree -- see, for example, \cite{IntroAlgs}) containing all of the $r$-tuples of $t_1$ (with some induced order on the $r$-tuples).
We then move to the second track in the list $t_2$.
For every $r$-tuple of  $t_2$, we search for a match in the tree $\mathcal{T}$.
If we do not find any, we insert all of $t_2$'s $r$-tuples into $\mathcal{T}$ and we add $t_2$ to the output.
Otherwise, $t_2$ is not included in the output and we leave the tree unchanged.
We repeat this procedure for the remaining tracks.
In the end, the output contains all the desired tracks.
See Figure \ref{fig:redblacktree} for an illustration of the algorithm.

With each accepted track only $\binom{L}{r}=O(1)$ elements are inserted in $\mathcal{T}$.
Since $k_{\text{cand}}$ candidate tracks reach the cleaning stage, the size of the tree never exceeds $O(k_{\text{cand}})$.
So, we guarantee $O(\log k_{\text{cand}})$ complexity for the search and insertion tasks.
This means that we only spend $O(\log k_{\text{cand}})$ time per candidate track.
Overall, our cleaning algorithm has complexity $O(k_{\text{cand}} \log k_{\text{cand}})$.

To generalize this to the case of varied number of hits per track, note that we can only find up to $L = O(1)$ different track sizes.
Let $R = \lceil f L \rceil$.
We initialize $R^2$ empty balanced binary search trees $\mathcal{T}_{i,j}$ for $i, j \in \{1, 2, \ldots, R\}$.
The first track $t_1$ is immediately included in the output.
Say it has $L_1$ hits and let $r_1 = \lceil f L_1 \rceil$.
We insert all of the $r$-tuples of $t_1$ for $r \leq r_1$ into $\mathcal{T}_{r, r_1}$.
Let the second track $t_2$ have size $L_2$ and $r_2 = \lceil f m_2 \rceil$.
There are two cases to consider when two tracks share more than $\min(r_1, r_2)$ hits:
\begin{enumerate}[label=(\alph*)]
    \item $r_1 \leq r_2$: the overlapping tuples are represented in $\mathcal{T}_{r_1, r_1}$. Searching all trees $\mathcal{T}_{r, r}$ for $r \leq r_2$ will reveal the overlap.
    \item $r_1 > r_2$: the overlapping tuples are represented in $\mathcal{T}_{r_2, r_1}$. Searching all trees $\mathcal{T}_{r_2, r}$ for $r>r_2$ will reveal the overlap.
\end{enumerate}
If we do not find any match, we insert all of the $r$-tuples of $t_2$ for $r \leq r_2$ into $\mathcal{T}_{r, r_2}$ and add $t_2$ to the output.
Repeating this for all tracks will guarantee that there are no two tracks $t_i$ and $t_j$ in the output sharing more than $\min (r_i, r_j)$ hits.
We write down our improved version of cleaning in Algorithm \ref{algo:improvedcleaning}.

\begin{proofT}{Theorem \ref{thm:newcleaning}}
The reasoning is essentially the same as for constant-sized tracks.
For each accepted candidate track the number of tuples inserted into the corresponding search tree s bounded by $\binom{L}{r}=O(1)$.
Therefore, no tree will contain more than $O(k_{\text{cand}})$ elements, and the search and insert operations can always be performed in $O(\log k_{\text{cand}})$ time.
Since there are $R^2=O(1)$ trees, we spend $O(\log k_{\text{cand}})$ per track candidate.
\end{proofT}

\section{More on reconstructing tracks in superposition \label{app:reco_supersposition}}

We start by slightly adjusting the steps in the classical CTF track building algorithm (Algorithm \ref{algo:finding}). For a given seed, CTF selects up to $\lambda$ track candidates in each layer to propagate to the next layer. If fewer track candidates have acceptable $\chi^2$ value, fewer than $\lambda$ track candidates are formed. Here we form exactly $\lambda$ new track candidates for every given track candidate. If there is at least one hit with $\chi_l^2(\mathbf{m}_{l,j}) < \chi_0^2$, we use $\lambda$ hits with the lowest $\chi^2$ values to build the new track candidates. If there is no such hit, we use one ghost hit and $\lambda - 1$ hits with the lowest $\chi^2$ values. We also build tracks for all triplets in the seeding layer. This would add substantial unnecessary work in the classical case, but does not add complexity if performed in quantum superposition. As in other sections, we can add placeholder hits to ensure that each layer has exactly $n$ hits and all tracks traverse through all $L$ layers. Thus at the end of the track finding phase we have exactly $\lambda^L n^c$ track candidates.

Based on this modified algorithm we construct a family of unitary transformations $\mathbf{U}_i$ that perform seeding, track finding, cleaning and selection in superposition with the following effect:

\begin{align}
\label{eq:inner}
    \mathbf{U}_i\ket{0}  = 
    &
    \sqrt{\frac{1-\epsilon}{\lambda^L n^c}}
    \Bigg(\sum_{j=0}^{k_i-1}\ket{\psi_j}\ket{-q_{L-1, j}} + \nonumber \\  
    & + \sum_{j=k_i}^{\lambda^Ln^c-1}\ket{\psi_j}\ket{+\infty} \Bigg) +
    \sqrt{\epsilon}\ket{\psi_\epsilon}\ket{+\infty}.
\end{align}

Here $\ket{\psi_0}, \ket{\psi_1}, \ldots, \ket{\psi_{k_i-1}}$ are the computational basis states encoding track candidates that 
\begin{enumerate}[label=(\alph*)]
    \item the classical CTF algorithm would output after the track building stage,
    \item do not share too many hits with the $i$ tracks already added to the output.
\end{enumerate}
$q_{L-1, j}$ is the quality score \eqref{eq:qualityscore} at the last layer of the track encoded in $\psi_j$. We consider $-q_{L-1, j}$ to formulate the task as a minimization problem. The computational basis states $\ket{\psi_{k_i}}, \ket{\psi_{k_i+1}}, \ldots,$ $\ket{\psi_{\lambda^L n^c-1}}$ encode some track candidates that do not pass (a) or (b).
Note that the tracks that were previously sampled belong to this set of states.
``$+\infty$'' is a large positive value, so the minimum finding gives answers only in the useful subset of the entangled computational basis states. $\ket{\psi_\epsilon}$ is some arbitrary quantum state, and $\epsilon$ is the \emph{error} probability of $\mathbf{U}_i$, i.e., the probability that the measurement of $\mathbf{U}_i\ket{0}$ would produce a result other than one of $\psi_0, \psi_1, \ldots, \psi_{\lambda^L n^c-1}$.

Next we show that such a family of unitary transformations can indeed be constructed. We first consider the track building (Lemma \ref{lem:qram_finding}) and cleaning (Lemma \ref{lem:qram_cleaning}) sub-procedures. 
Track building prepares an equal superposition over the $\lambda^L n^c$  track candidates with an additional arbitrary quantum state \eqref{eq:inner_finding} representing the error probability of the algorithm. Selecting the track candidates of the original CTF after track building step is subsumed by the selection stage (Theorem~\ref{lem:Uinner}).

\begin{lemma}[Track building in superposition]
\label{lem:qram_finding}
There exists a unitary transformation $\mathbf{U}_\mathrm{build}$ \eqref{eq:inner_finding} that performs track building in $\tilde{O}(\sqrt{n})$ time in superposition.

\end{lemma}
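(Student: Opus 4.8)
The plan is to realise $\mathbf{U}_\mathrm{find}$ as a short composition of layer-by-layer ``extension'' unitaries, exploiting the observation that — once we commit to never pruning track candidates — the classical track finding of Algorithm \ref{algo:finding} becomes, up to a controlled failure probability, a reversible process. First I would prepare with $3\log n$ Hadamard gates the uniform superposition over seed triplets, $\frac{1}{\sqrt{n^3}}\sum_{j_0,j_1,j_2=0}^{n-1}\ket{j_0,j_1,j_2}$, alongside registers initialised to carry the estimated seed state vector $\mathbf{p}_{2|2}$ and quality score $q_2$, both computed coherently in $O(1)$ arithmetic from the three hit coordinates fetched by the QRAM oracle $\mathbf{Q}$. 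Then, for each layer $l$ from $3$ to $L-1$, I would apply a unitary $\mathbf{V}_l$ that, controlled on the register encoding the current track candidate and its filtered state vector $\mathbf{p}_{l-1|l-1}$, appends exactly $\lambda$ children — the $\lambda$ hits of layer $l$ with smallest predicted $\chi^2_{l|l-1}$ (equation \eqref{eq:predictedchi2}), padded with a reserved ghost-hit symbol whenever fewer than $\lambda$ hits beat the threshold $\chi_0^2$ — and coherently updates $\mathbf{p}_{l|l}$ and the quality score \eqref{eq:qualityscore} on each branch by $O(1)$ arithmetic. If we also pad every layer with placeholder hits of $\chi^2=+\infty$ so that each layer has exactly $n$ slots, then after the last layer the state is, up to a garbage term, the uniform superposition $\sqrt{(1-\epsilon)/(\lambda^L n^3)}\sum_j\ket{\psi_j}\ket{-q_{L-1,j}}$ over the $\lambda^L n^3$ track candidates, matching the structure of \eqref{eq:inner}.

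The core of each $\mathbf{V}_l$ is the sub-unitary that, in superposition over predicted states, returns the $\lambda$ layer-$l$ hits of least $\chi^2_{l|l-1}$. I would first assemble the coherent oracle $\mathbf{O}_{\chi}$, mapping $\ket{l,j}\ket{x}\mapsto\ket{l,j}\ket{x\oplus\chi^2_{l|l-1}(\mathbf{m}_{l,j})}$ from $\mathbf{Q}$ and an $O(1)$-size arithmetic circuit (hence $O(\log n)$ gates), and from it the comparison oracle $\mathbf{O}_{find}$, exactly as in Section \ref{sec:maximumfinding}. Running the quantum minimum finding of \cite{DurrMinimum} (Algorithm \ref{algo:quantumminimumfinding}, built on the exponential search of \cite{BoyerSearch}) coherently, controlled on the state-vector register, costs $\tilde{O}(\sqrt{n})$ (Theorem \ref{thm:quantumminimumfinding}); repeating this $\lambda$ times, and after each round bumping the $\chi^2$ value of the hit just selected so that the next round returns the next-smallest, yields all $\lambda$ children. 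Since $\lambda$ and $L$ are $O(1)$, the whole of $\mathbf{U}_\mathrm{find}$ makes $O(1)$ minimum-finding calls and runs in $\tilde{O}(\sqrt{n})$ time, as claimed.

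For the error bookkeeping, a single run of quantum minimum finding succeeds with probability at least $1/2$, so running it $O(\log n)$ times and coherently retaining the output of least $\chi^2$ (one further $O(1)$-size comparison circuit) drives the per-round failure probability below $1/\mathrm{poly}(n)$ while keeping the cost $\tilde{O}(\sqrt{n})$. A union bound over the $O(1)$ layers and $\lambda$ rounds then bounds the total failure probability by $\epsilon=1/\mathrm{poly}(n)$, in particular well away from $1$. The branches on which some minimum-finding call erred are exactly what is collected into the garbage vector $\sqrt{\epsilon}\,\ket{\psi_\epsilon}\ket{+\infty}$; on the complementary weight $1-\epsilon$ the internal work registers are uncomputed and the output factorises into the stated tensor-product form.

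The main obstacle, I expect, is turning the inherently probabilistic minimum-finding subroutine of \cite{DurrMinimum} into a genuine unitary acting coherently on a superposition of predicted states: one must fix a deterministic search schedule, so that there is no mid-circuit randomness that would otherwise have to be ``measured away'', and one must argue that on the weight-$(1-\epsilon)$ success branch the ancillary registers can be cleanly uncomputed, so that $\mathbf{U}_\mathrm{find}\ket{0}$ genuinely factorises rather than leaving the hit registers entangled with minimum-finding garbage. A secondary but essential subtlety is combinatorial: forcing every track candidate to spawn exactly $\lambda$ children and every layer to expose exactly $n$ hit-slots (via the ghost-hit and placeholder symbols), so that the final state is honestly uniform over a fixed family of $\lambda^L n^3$ computational basis states and the non-reversibility of the classical pruning step never enters the construction.
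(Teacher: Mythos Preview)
Your proposal follows the same architecture as the paper's proof: prepare the seed superposition with Hadamards, then coherently extend layer by layer using quantum minimum finding $\lambda$ times per layer, always spawning exactly $\lambda$ children (padded with ghost/placeholder symbols) so that the branching is uniform over $\lambda^L n^3$ candidates. Where you diverge is in how you propose to render minimum finding unitary, and there you leave the decisive technical step unresolved.

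You correctly identify the main obstacle, but your suggested remedy --- ``fix a deterministic search schedule'' and uncompute ancillas on the success branch --- is not what the paper does, and you do not carry it out. The paper instead \emph{keeps} the randomness: it applies the principle of deferred measurement, replacing mid-circuit measurements by CNOTs onto ancillas and reproducing the random choices inside exponential search by conditioning on an ancillary register prepared in uniform superposition over $\{0,\ldots,m\}$. More importantly, the paper confronts a point your proposal misses. Algorithm~\ref{algo:quantumminimumfinding} enforces a \emph{global} budget of $22.5\sqrt{n}$ oracle calls, adaptively shared across a variable number of outer-loop iterations; unrolled into a fixed-depth circuit, the worst-case number of outer iterations and the worst-case depth of each inner exponential search are both $O(\sqrt{n})$, which would give $O(n)$ gates rather than $\tilde{O}(\sqrt{n})$. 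The paper's fix is to hard-cap the outer loop at $c(\log n + 1)$ iterations, invoking Markov's inequality (the expected number of outer iterations is below $\log n + 1$) to keep the per-call failure probability below $1/c$. Your appeal to Theorem~\ref{thm:quantumminimumfinding} for the $\tilde{O}(\sqrt{n})$ cost does not cover this, since that theorem concerns the measurement-based routine with adaptive stopping, not a fixed-depth unitary; and your $O(\log n)$ outer repetitions serve to boost the success probability but do not by themselves bound the depth of a single coherent run.
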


\begin{equation}
\label{eq:inner_finding}
    \mathbf{U}_\mathrm{build}\ket{0} = \sqrt{\frac{1-\epsilon}{\lambda^L n^c}}\sum_{j=0}^{\lambda^Ln^c-1}\ket{\psi_j} + \sqrt{\epsilon}\ket{\psi_\epsilon}
\end{equation}

\begin{proof}{}
Preparing an equal superposition over all the possible seed $c$-tuplets, i.e., seeding (implicit in $\mathbf{U}_\mathrm{build}$), can be done in $\tilde{O}(1)$ time with the Walsh-Hadamard transform. We have seen  that we can perform track building for one seed in $\tilde{O}(\sqrt{n})$ time with constant probability. However, both quantum minimum finding and its sub-procedure -- quantum exponential searching algorithm -- use measurements. While we cannot use measurements in our unitary transformations $\mathbf{U}_\mathrm{build}$, we can apply the principle of deferred measurements \cite{NielsenChuang}. Whenever the Algorithm \ref{algo:quantumminimumfinding} performs a measurement, we can instead perform CNOT operations on an ancillary register. When the algorithm conditions a quantum operation on a measurement result, we can perform a controlled operation with the ancillary register as the control. The probability ($1-\epsilon$) to get the desired result based on measurements during the procedure or by deferring the measurement is the same. We can replicate the randomness in the quantum exponential searching algorithm \cite{BoyerSearch} by conditioning operations on the equal superposition of the allowed values $\{0, 1, \ldots, m\}$, where $m$ is an arbitrary integer. By conditioning on the digits of the binary representation of these values, as in quantum counting \cite{quantumcounting}, we can ensure that we only need $O(m)$ such operations and the asymptotic computational complexity of quantum exponential searching algorithm remains unchanged (up to constant factors).

One issue with this approach is that both the number of iterations of the outer loop of the quantum minimum finding (Algorithm \ref{algo:quantumminimumfinding}) and the time complexity of its sub-procedure -- quantum exponential searching \cite{BoyerSearch} -- may be proportional to $\sqrt{n}$. In the classical algorithm, if the quantum exponential searching takes more time, we can limit the number of iterations of the outer loop to ensure running time $\tilde{O}(\sqrt{n})$. In the quantum circuit we need to account for the worst case number of iterations in the main loop and the worst-case running time in the sub-procedure. This requires more than $\tilde{O}(\sqrt{n})$ gates. However, we can set a hard limit on the number of iterations of the main loop. Since the expected size of the search space decreases by more than a half with each iteration of the outer loop, the expected number of iterations to reach the minimum is less than $\log{n}+1$. If we limit the number of iterations of the outer loop to $\gamma(\log{n} + 1)$ for some constant $\gamma$, then by Markov's inequality the probability that we have not reached the minimum is less than $1/\gamma$. Since it is still upper-bounded by a constant, the rest of the analysis does not change. The limit on the number of iterations also implies a limit on the number of measurements and the required number of ancillary registers to account for the measurements in the quantum procedure.

\end{proof}

\begin{lemma}[Cleaning in superposition]
\label{lem:qram_cleaning}
There exists a unitary transformation that runs in $\tilde{O}(1)$ time and marks the track candidates that do not share any $r$-tuple of hits with any track already added to the output.
\end{lemma}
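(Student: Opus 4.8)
Proof proposal.

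The plan is to obtain $\mathbf{O}_\mathrm{clean}$ by reversibilizing the \emph{query side} of Algorithm \ref{algo:improvedcleaning} and backing it by QRAM. I would treat the maintenance of the data structures as classical preprocessing performed between successive applications of the $\mathbf{U}_i$: having appended the first $i$ output tracks, we hold the $R^2=O(1)$ balanced search trees $\mathcal{T}_{r',r}$ of Algorithm \ref{algo:improvedcleaning}, which together store all $r$-tuples of those $i$ tracks. Since $L=O(1)$ and each track contributes only $\binom{L_j}{r_j}=O(1)$ tuples, the trees have $O(i)$ nodes in total; loading them into QRAM as flat arrays lets us read any node (an $O(\log n)$-bit string) in $O(\log n)$ time with the operator $\mathbf{Q}$ of Section \ref{sec:quantumseeding}.

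The unitary itself would act on $\ket{\psi_j}\ket{b}$ together with ancillas, where $\ket{\psi_j}$ is the computational-basis encoding of a track candidate (its track vector, ghost-hit markers and quality score). Using only reversible arithmetic and controlled-$\mathbf{Q}$ calls it would: (i) read the number $L_j$ of real hits from $\ket{\psi_j}$ and compute $r_j=\lceil f L_j\rceil$; (ii) enumerate the $\binom{L_j}{r_j}=O(1)$ $r_j$-tuples of the track vector; (iii) for each such tuple, and for each of the $O(1)$ relevant trees (exactly the $\mathcal{T}_{r',r'}$ with $r'\le r_j$ and the $\mathcal{T}_{r_j,r'}$ with $r'\ge r_j$, mirroring the two cases (a)/(b) of Section \ref{sec:improvedcleaning}), perform a coherent binary-search descent of depth $O(\log n)$, each step reading a node via $\mathbf{Q}$ and doing an $O(\log n)$-bit comparison, and record in an ancilla whether a match occurred; (iv) \texttt{OR} all the match flags, set $b\mapsto b\oplus 1$ exactly when no match was found (i.e.\ exactly on the candidates that do not share too many hits with any already-output track), and then uncompute steps (i)--(iii). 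Each sub-step is a classical computation, hence realizable reversibly \cite{NielsenChuang}, and composed with $\mathbf{Q}$ it is unitary; the uncomputation in (iv) ensures the only lasting effect is the single marking bit, so $\mathbf{O}_\mathrm{clean}$ has the claimed action on an arbitrary superposition. The gate count is $O(1)$ tuples $\times\,O(1)$ trees $\times\,O(\log n)$ descent steps $\times\,O(\log n)$ per step, i.e.\ $O(\log^2 n)=\tilde O(1)$.

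The part I expect to need the most care is turning a \emph{data-dependent} search-tree lookup — whose path length and branching depend on what is stored — into a \emph{fixed-depth, garbage-free} reversible circuit. I would handle this exactly as the analogous issue in Lemma \ref{lem:qram_finding}: always execute the worst-case number $O(\log n)$ of descent steps (the balance invariant of a red--black tree \cite{IntroAlgs} guarantees this bound), conditioning each step on the comparison outcome of the previous one via a controlled operation, and keep all intermediate comparison results in dedicated ancillas that are uncomputed at the end. Two minor points worth stating explicitly but not obstacles: the lemma is necessarily phrased relative to the $i$ tracks ``already added to the output'' because the improved cleaning of Section \ref{sec:improvedcleaning} is inherently sequential, which is precisely why the outer algorithm uses the family $\mathbf{U}_i$ rather than a single unitary; and the inter-step classical bookkeeping (inserting the $O(1)$ tuples of a newly sampled track into the trees and refreshing QRAM) costs only $\tilde O(1)$ per appended track, so it does not affect the stated $\tilde O(1)$ running time of the unitary.
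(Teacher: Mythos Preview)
Your proposal is correct and follows essentially the same approach as the paper: store the balanced search tree(s) from Section~\ref{sec:improvedcleaning} in QRAM and, for each of the $\binom{L}{r}=O(1)$ $r$-tuples of the candidate, perform a coherent binary-search lookup in $O(\log n)$ QRAM queries, then \texttt{OR} the results. The paper's own proof is terser --- it assumes all tracks have length $L$ so only a single tree $\mathcal{T}$ is needed (deferring the variable-length case to a remark) and does not spell out the fixed-depth reversibilization or uncomputation --- but your more explicit treatment of the $R^2$ trees and the data-dependent descent is a faithful elaboration of the same idea, not a different route.
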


\begin{proof}{}
We have assumed that all particles traverse all layers, so all tracks are of length $L$ and are allowed to share up to exactly $r$ hits for some value of $r$. 
As with Algorithm \ref{algo:improvedcleaning}, we can generalize it to variable length tracks with a constant factor increase in complexity.
Like in the classical case, we can test whether an $r$-tuple has already been added to the tree $\mathcal{T}$ with $O(\log{n})$ queries to QRAM storing the values of the nodes of tree $\mathcal{T}$. 
Thus each track can be associated with a list of $\binom{L}{r}$ binary values indicating if an $r$-tuple has already been added to the output in $\tilde{O}(1)$ time.
Testing whether any of these values is equal to $1$ requires $O(1)$ gates.
Hence the total time required to mark the necessary track candidates is $\tilde{O}(1)$.
\end{proof}

\begin{lemma}
\label{lem:Uinner}
Each unitary transformation $\mathbf{U}_i$ \eqref{eq:inner} can be built to run in time $\tilde{O}(\sqrt{n})$.
\end{lemma}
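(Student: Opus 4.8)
The plan is to construct $\mathbf{U}_i$ as a composition of three reversible blocks that mimic, in superposition, the track-finding, track-cleaning and track-selection stages of the CTF, and then to check that each block fits the stated budget. The track-finding and cleaning blocks are essentially handed to us by Lemmas \ref{lem:qram_finding} and \ref{lem:qram_cleaning}; the selection block is where the work lies, and it is also where the ``$-q_{L-1,j}$ versus $+\infty$'' bookkeeping of \eqref{eq:inner} gets done.

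First I would apply $\mathbf{U}_\mathrm{find}$ from Lemma \ref{lem:qram_finding}, producing the equal superposition \eqref{eq:inner_finding} over all $\lambda^L n^3$ over-branched track candidates (plus the $\sqrt{\epsilon}$-weight error state) at cost $\tilde O(\sqrt n)$. Then, to each candidate in the superposition I would apply the marking unitary of Lemma \ref{lem:qram_cleaning}, writing into an ancilla whether that candidate shares an $r$-tuple of hits with any of the $i$ tracks already committed to the output; this costs $\tilde O(1)$. After these two steps the register holds a uniform superposition over the $\lambda^L n^3$ candidates, each tagged with a ``clean/not clean'' flag.

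The selection block must then do three things for each candidate $\psi_j$: (i) decide whether $\psi_j$ is genuinely one of the $k_\mathrm{find}$ tracks that classical track finding keeps, as opposed to a branch that the over-branching created but that Algorithm \ref{algo:finding}'s ``keep the best $\lambda$ per layer'' pruning would have discarded; (ii) refit $\psi_j$ with the Kalman smoother and recompute its quality score; and (iii) write $-q_{L-1,j}$ into the score register when $\psi_j$ passes both (i) and the cleaning flag, and $+\infty$ otherwise. For (i), I would re-run the \emph{pruned} classical track finding for the seed carried by $\psi_j$ --- using the quantum minimum-finding routine of Section \ref{sec:maximumfinding} with measurements deferred exactly as in the proof of Lemma \ref{lem:qram_finding}, and noting that only $O(L\lambda^2)=O(1)$ minimum-finding calls are needed per seed, this is an $\tilde O(\sqrt n)$-gate circuit applied in superposition over the seed registers --- and then compare $\psi_j$ against the at most $\lambda$ surviving tracks. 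Step (ii) treats each track independently (a backward pass over $L=O(1)$ layers with $O(1)$-size matrix algebra and $O(\log n)$-time QRAM look-ups of the hit coordinates), hence $\tilde O(1)$; step (iii) is $O(1)$ arithmetic conditioned on the flags. Finally I would copy out the track and its score and uncompute the flags and auxiliary registers, the residual failure amplitudes of the deferred-measurement minimum-finding invocations being absorbed into the $\sqrt{\epsilon}\ket{\psi_\epsilon}\ket{+\infty}$ term of \eqref{eq:inner}; since each such invocation succeeds with constant probability and there are only $O(1)$ of them, $\epsilon$ stays below any prescribed constant. Summing the costs gives $\tilde O(\sqrt n)+\tilde O(1)+\tilde O(\sqrt n)+\tilde O(1)=\tilde O(\sqrt n)$, as claimed.

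The hard part will be step (i): this is exactly where the CTF's irreversible ``keep the best $\lambda$'' pruning has to be reconstructed on the fly. One must argue that re-simulating the pruned track finding per seed, in superposition, genuinely fits in $\tilde O(\sqrt n)$ --- which rests on there being only $O(1)$ minimum-finding calls per seed, rather than $O(n^3)$ of them as in the whole-event analysis of Theorem \ref{thm:quantumtrackfinding} --- and that the deferred-measurement simulation of those quantum searches leaves the output in precisely the form \eqref{eq:inner} up to the single controlled error term $\sqrt{\epsilon}\ket{\psi_\epsilon}$.
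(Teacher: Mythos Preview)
Your proposal is correct and tracks the paper's own argument closely: both decompose $\mathbf{U}_i$ into the track-finding block of Lemma~\ref{lem:qram_finding} ($\tilde{O}(\sqrt{n})$), the cleaning mark of Lemma~\ref{lem:qram_cleaning} ($\tilde{O}(1)$), and a selection block. The one substantive difference is your step~(i). The paper's proof simply asserts that marking which candidates survive the CTF's ``keep the best $\lambda$'' pruning ``depend[s] on a constant number of fixed-precision numbers'' and is therefore $\tilde{O}(1)$ --- implicitly relying on $\mathbf{U}_\mathrm{find}$ having left the full $O(\lambda^L)=O(1)$-node branching tree for the seed in its workspace, so that simulating the pruning is a constant-size comparison. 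You instead re-run the pruned quantum track finding of Section~\ref{sec:maximumfinding} for the seed carried by $\psi_j$, at an extra $\tilde{O}(\sqrt{n})$ cost, and compare against the at most $\lambda$ survivors. Your treatment is more explicit about where the pruning information comes from and avoids any tacit assumption about what $\mathbf{U}_\mathrm{find}$ retains in ancilla, at the price of a harmless constant-factor redundancy (the deferred-measurement minimum-finding is effectively executed twice per seed); the overall $\tilde{O}(\sqrt{n})$ bound is unaffected either way.
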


\begin{proof}{}
We already saw that track building requires $\tilde{O}(\sqrt{n})$ time (Lemma \ref{lem:qram_finding}) and testing whether a track overlaps with any already added to the output takes $\tilde{O}(1)$ time (Lemma \ref{lem:qram_cleaning}). Procedures necessary for the track selection -- marking the tracks that pass the track building stage in CTF, refitting, recalculating the score and comparing to a threshold value -- depend on a constant number of fixed-precision numbers, and hence can be done in $\tilde{O}(1)$ time. So the time complexity of $\mathbf{U}_i$ is dominated by the track building and is $\tilde{O}(\sqrt{n})$.
\end{proof}

We will now describe how we can use transformations $\mathbf{U}_i$ with quantum minimum finding to reconstruct the tracks one-by-one (Algorithm \ref{algo:reco_in_superposition}). We will search for $\psi_i^* = \arg\min Q_i(\psi)$, where $Q_i(\psi)$ is the score encoded in the second register of $\mathbf{U}_i\ket{0}$ \eqref{eq:inner}. The time complexity of the quantum minimum finding \cite{DurrMinimum} remains the same (up to constant factors) if instead of quantum exponential searching \cite{BoyerSearch} we use amplitude amplification (Theorem \ref{thm:aa}).

\begin{theorem}[Amplitude amplification \cite{brassard2002quantum}]
\label{thm:aa}
Let $\mathcal{A}$ be any quantum algorithm that uses no measurements, and let $a$ denote the initial success probability of $\mathcal{A}$. There exists a quantum algorithm that finds a good solution using an expected number of applications of $\mathcal{A}$ and $\mathcal{A}^{-1}$ which are in $\Theta(1/\sqrt{a})$ if $a > 0$, and otherwise runs forever.
\end{theorem}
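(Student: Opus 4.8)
The plan is to prove Theorem~\ref{thm:aa} (amplitude amplification) following the standard analysis of the \texttt{QSearch} procedure of Brassard, H\o{}yer, Mosca, and Tapp, specializing to the case where a \emph{fixed-point-free} generalization is not needed because we only care about expected running time with an unknown success probability $a$. I will treat the algorithm $\mathcal{A}$ as preparing a state $\mathcal{A}\ket{0} = \sqrt{a}\,\ket{\psi_{\mathrm{good}}} + \sqrt{1-a}\,\ket{\psi_{\mathrm{bad}}}$, where $\ket{\psi_{\mathrm{good}}}$ (resp.\ $\ket{\psi_{\mathrm{bad}}}$) is the normalized projection onto the subspace of good (resp.\ bad) basis states, as identified by a Boolean oracle $\chi$ that flags good solutions.

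First I would define the amplitude-amplification iterate $\mathbf{Q} = -\mathcal{A} S_0 \mathcal{A}^{-1} S_\chi$, where $S_\chi$ flips the phase of good states and $S_0$ flips the phase of $\ket{0}$. The key linear-algebra fact is that $\mathbf{Q}$ acts as a rotation by angle $2\theta$ on the two-dimensional real subspace $\mathrm{span}\{\ket{\psi_{\mathrm{good}}}, \ket{\psi_{\mathrm{bad}}}\}$, where $\sin^2\theta = a$; consequently $\mathbf{Q}^j \mathcal{A}\ket{0} = \sin((2j+1)\theta)\ket{\psi_{\mathrm{good}}} + \cos((2j+1)\theta)\ket{\psi_{\mathrm{bad}}}$. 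If $a$ (hence $\theta$) were known, choosing $j \approx \pi/(4\theta)$ would give success probability close to $1$ with $O(1/\sqrt{a}) = O(1/\theta)$ applications of $\mathcal{A}$ and $\mathcal{A}^{-1}$, since each iterate uses one of each. The remaining issue is that $a$ is unknown, which I would handle with the standard exponential-search wrapper: pick an increasing schedule of limits $m = 1, \lceil c \rceil, \lceil c^2 \rceil, \ldots$ for some $1 < c < 2$, and at stage $\ell$ run $\mathbf{Q}^{j}$ for a uniformly random $j \in \{0, 1, \ldots, m-1\}$, then measure and check with $\chi$. A short trigonometric estimate shows that once $m \gtrsim 1/\sin(2\theta)$, a uniformly random number of iterations in $\{0,\ldots,m-1\}$ yields a good solution with probability at least a constant (the standard bound gives $\ge 1/4$ once $m \ge 1/\sin(2\theta)$). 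Summing the geometric schedule, the expected total number of applications of $\mathcal{A}$ and $\mathcal{A}^{-1}$ before success is $\Theta(1/\sin\theta) = \Theta(1/\sqrt{a})$; when $a = 0$ no measurement ever returns a good solution and the loop never terminates, matching the ``runs forever'' clause. Since the paper already invokes the quantum exponential searching algorithm of~\cite{BoyerSearch} and quantum counting~\cite{quantumcounting} elsewhere, I would freely reuse those tools and keep the argument brief.

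The main obstacle — really the only nontrivial point — is verifying the two-dimensional rotation structure of $\mathbf{Q}$ when $\mathcal{A}$ is an arbitrary measurement-free algorithm rather than the uniform superposition used in plain Grover search: one must check that $\mathcal{A} S_0 \mathcal{A}^{-1}$ is (up to sign) the reflection about the state $\mathcal{A}\ket{0}$, and that the composition with $S_\chi$ preserves the subspace $\mathrm{span}\{\ket{\psi_{\mathrm{good}}}, \ket{\psi_{\mathrm{bad}}}\}$ and restricts to a rotation there. This is a routine but slightly fiddly computation with two reflections in a plane; I would state it as a lemma, give the angle identity $\sin^2\theta = a$, and cite~\cite{brassard2002quantum} for the full details rather than reproducing them. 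Everything else — the random-iteration success-probability bound and the geometric-schedule expectation — is an elementary calculation that I would only sketch.
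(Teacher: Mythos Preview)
Your proposal is a correct sketch of the standard Brassard--H\o{}yer--Mosca--Tapp proof, but note that the paper does not prove this theorem at all: Theorem~\ref{thm:aa} is simply quoted from~\cite{brassard2002quantum} as a black-box tool, with no accompanying argument. So there is nothing in the paper to compare your proof against; your write-up would in fact be strictly more than what the paper provides.
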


Let $a$ be the probability to find $\psi_i^*$ (or any specific track encoded in $\{\psi_0, \psi_1, \ldots, \psi_{k_i-1}\}$) by measuring $\mathbf{U}_i\ket{0}$. Then $a = {(1-\epsilon)/(\lambda^L n^c)}$ and the expected number of calls to $\mathbf{U}_i$ by the quantum minimum finding algorithm for a constant probability of error is 
\begin{equation}
    O\left(\sqrt{(\lambda^L n^c)/(1-\epsilon)}\right)=O\left(\sqrt{n^c}\right).
\end{equation}

As with Algorithm \ref{algo:quantumfinding}, repeating the quantum minimum finding algorithm $O(\log{n})$ times allows us to reduce the error probability to $O(1/n)$ so that sampling $O(n)$ tracks has a constant probability of error. In particular, since one application of the quantum minimum finding algorithm ensures failure probability smaller than $1/2$ and there cannot be more than $\lambda n^c$ tracks, the probability to not find the minimum in any of the iterations (if there are any valid tracks remaining) after repeating the algorithm $\lceil \log{2 \lambda n^c} \rceil$ times is below $1/2$.

Once we have found the best track, we can build a self-balancing binary tree $\mathcal{T}$ to be used in track selection for the next track. More generally -- suppose that we have found the $j$ best tracks tracks that the CTF algorithm outputs. Each time we find a new track, we insert it in $\mathcal{T}$. This tree never exceeds $O(n)$ size, and so the insertion operation cost is $O(\log{n})$. $\mathbf{U}_i$ queries $\mathcal{T}$ to mark as invalid those tracks that have an overlap with the $i$ tracks already added to output.

\begin{proofT}{Theorem \ref{thm:ctf}}
Each iteration of the main loop in Algorithm \ref{algo:reco_in_superposition} takes $\tilde{O}(\sqrt{n^c}\max_i T_{\mathbf{U}_i})$ time, where $\max_i T_{\mathbf{U}_i} = \tilde{O}(\sqrt{n})$ (Lemma \ref{lem:Uinner}). There are $O(n)$ iterations to reconstruct $O(n)$ tracks. Thus the total time complexity of Algorithm \ref{algo:reco_in_superposition} is
\begin{equation}
    \tilde{O}\left(n \cdot \sqrt{n^c} \cdot \sqrt{n}\right) = \tilde{O}\left(n^{\frac{c + 3}{2}}\right).
\end{equation}
\end{proofT}

We note that for the special case where tracks are not allowed to share any hits, the approach described in this section allows the complete removal of the cleaning stage. Once the best track is found, all the points that belong to it can be masked (removed) and the algorithm is run again to find the best track on the remaining points.

\newpage

\section{Pseudo-codes \label{app:pseudocodes}}

\begin{algorithm}
\SetArgSty{}
\caption{Seeding (classical)}
\label{algo:seeding}
\SetKwData{seed}{seed}
\SetKwInOut{Input}{input}\SetKwInOut{Output}{output}
\SetKwData{seedlist}{seed\_list}
\SetKwData{seed}{seed}
\SetKwData{newseed}{new\_seed}

\Input{event record}
\Output{seeds}

\seedlist $\leftarrow \{\mathbf{m}_{0,0}, \mathbf{m}_{0,1}, \ldots, \mathbf{m}_{0,n-1}\}$\;
\ForEach{layer $l$ from $1$ to $c-1$}{
    \ForEach{\seed in \seedlist}{
        \ForEach{hit $\mathbf{m}_{l,j}$ in layer $l$}{
        	\If{$\mathbf{m}_{l,j}$ is a valid continuation for \seed}{
        		\newseed $\leftarrow$ \seed $\cup$ $\mathbf{m}_{l,j}$\;
        		add \newseed to \seedlist\;
            	}
        }
        remove \seed from \seedlist\;
    }
}
output \seedlist\;

\end{algorithm}

\begin{algorithm}
\SetArgSty{}
\caption{Seeding with quantum search}
\label{algo:quantumseeding}
\SetKwInOut{Input}{input}\SetKwInOut{Output}{output}
\SetArgSty{}
\SetKwProg{Rp}{repeat}{}{many times}

\Input{event record}
\Output{seeds}

$\tilde{k} \leftarrow$ quantum counting estimation of  $k_{\text{seed}}$\label{step:counting}\;
$m \leftarrow \left\lfloor \pi \bigg/ 4 \arcsin \left( \sqrt{ \frac{k_{\text{seed}}}{n^c}  }\right) \right\rfloor$\;
\While{we have not found $\tilde{k}$ good seeds \label{step:seedwhile}}{
    prepare and measure state\label{step:groverstate} \begin{equation*}
    \mathbf{G}^m \cdot \left( \frac{1}{\sqrt{n^c}} \sum_{j_0, \ldots, j_{c-1} = 0}^{n-1}  \ket{ j_0, \ldots, j_{c-1} } \right);
    \end{equation*}
    
    \If{outcome $j_0, \ldots, j_{c-1}$ corresponds to a good seed}{
    add $(\mathbf{m}_{0,j_0}, \mathbf{m}_{1,j_1}, \mathbf{m}_{2,j_2})$ to output\label{step:seedoutput}\;
    }
}
\end{algorithm}

\begin{algorithm}
\SetArgSty{}
\caption{Track building (classical)}
\label{algo:finding}
\SetKwInOut{Input}{input}\SetKwInOut{Output}{output}
\SetKwData{seed}{seed}
\SetKwData{tracklist}{candidate\_tracks}
\SetKwData{track}{track}
\SetKwData{newtrack}{new\_track}

\Input{seeds, generated by Algorithm \ref{algo:seeding}; event record}
\Output{candidate tracks}

\ForEach{\seed}{
initialize empty list \tracklist\;
estimate initial state vector $\mathbf{p}_{c-1|c-1}$ and quality factor $q_{c-1}$ for seed\;
add (\seed, $\mathbf{p}_{c-1|c-1}$, $q_{c-1}$) to \tracklist\;
\ForEach{layer $l$ from $c$ to $L-1$}{
	\ForEach{(\track, $\mathbf{p}_{l-1|l-1}$, $q_{l-1}$) in \tracklist}{
	    evaluate predicted measurement $\mathbf{m}_{l|l-1}$\;
		\ForEach{hit $\mathbf{m}_{l,j}$ in layer $l$ \label{step:hitloop}}{
			\If{$\chi_{l|l-1}^2(\mathbf{m}_{l,j}) < \chi_0^2$ }{
			\newtrack $\leftarrow$ \track   $+ \ \mathbf{m}_{l,j}$\;
			form new candidate track for seed with $\mathbf{m}_{l,j}$\;
			evaluate $\mathbf{p}_{l|l}$ and quality factor $q_l$ for \newtrack\;
			add (\newtrack, $\mathbf{p}_{l|l}$, $q_l$) to \tracklist\;
			}
		}
		\If{there is no hit $\mathbf{m}_{l,j}$ in layer $l$ such that $\chi_{l|l-1}^2(\mathbf{m}_{l,j}) < \chi_0^2$}{
			\newtrack $\leftarrow$ \track   $+ \ \mathbf{m}_{l|l-1}$
			evaluate $\mathbf{p}_{l|l}$ and quality factor $q_l$ for \newtrack\;
			add (\newtrack, $\mathbf{p}_{l|l}$, $q_l$) to \tracklist\;
		}
	    remove (\track, $\mathbf{p}_{l-1|l-1}$, $q_{l-1}$) from \tracklist\;
	}
select the best $\lambda$ tracks of \tracklist
}
add elements of \tracklist to output\;
\tcp*[h]{note: this description uses notation from Appendix \ref{app:trackbuilding}}
}
\end{algorithm}

\begin{algorithm}
\SetArgSty{}
\caption{Quantum minimum finding}
\label{algo:quantumminimumfinding}
\SetKwInOut{Input}{input}\SetKwInOut{Output}{output}
\SetKwData{ept}{empty}

\Input{prediction of track's state vector at layer $l-1$}
\Output{$j$ such that $\mathbf{m}_{l,j}$ that minimizes $\chi^2_{l|l-1}$}

initialize $j_0 \leftarrow$ \ept\;
set $y \leftarrow \chi^2_0$\;
\While{$\mathbf{O}_{find}$ has been called less than $22.5 \sqrt{n}$ times}{
    apply quantum exponential searching algorithm of \cite{BoyerSearch} with initial state $\left(\frac{1}{\sqrt{n}} \sum_{j=0}^n \ket{l,j} \right) \cdot \ket{y}$ and with $\mathbf{O}_{find}$ as oracle\;
    \If{we find an state $\ket{l,j}$ such that $\chi^2_{l|l-1}(\mathbf{m}_{l,j}) < y $}{
        set $j_0 \leftarrow j$\;
        set $y \leftarrow \chi^2_{l|l-1}(\mathbf{m}_{l,j})$\;
    }
}
\eIf{$j_0$ is not \ept}{
    \KwRet{$\mathbf{m}_{l,j_0}$}
}{
    \KwRet{``no good measurement''}
}

\end{algorithm}

\begin{algorithm}
\SetArgSty{}
\caption{Track building with quantum minimum finding}
\label{algo:quantumfinding}
\SetKwInOut{Input}{input}\SetKwInOut{Output}{output}
\SetKwData{seed}{seed}
\SetKwData{tracklist}{candidate\_tracks}
\SetKwData{track}{track}
\SetKwData{newtrack}{new\_track}

\Input{seeds, generated by Algorithm \ref{algo:seeding}; event record}
\Output{candidate tracks}

\ForEach{\seed}{
initialize empty list \tracklist\;
estimate initial state vector $\mathbf{p}_{c-1|c-1}$ and quality factor $q_{c-1}$ for seed\;
add (\seed, $\mathbf{p}_{c-1|c-1}$, $q_{c-1}$) to \tracklist\;
\ForEach{layer $l$ from $c$ to $L-1$}{
	\ForEach{(\track, $\mathbf{p}_{l-1|l-1}$, $q_{l-1}$) in \tracklist}{
	    evaluate predicted measurement $\mathbf{m}_{l|l-1}$\;
	    \For{$i$ from $1$ to $\lambda$}{
    	    run quantum minimum finding (Algorithm \ref{algo:quantumminimumfinding}) $\log(3 L \lambda^2 n^c)$ times (increasing the $\chi^2$ value of already used hits so not to find them again) \label{step:qmflogtimes}\;
    	    from the samples of step \ref{step:qmflogtimes}, select the measurement $\mathbf{m}_{l,j}$ with lowest $\chi_{l|l-1}^2$ \label{step:measurementselect}\;
    	    \If{$\chi_{l|l-1}^2(\mathbf{m}_{l,j}) < \chi_0^2$ }{
    			\newtrack $\leftarrow$ \track   $+ \ \mathbf{m}_{l,j}$\;
    			form new candidate track for seed with $\mathbf{m}_{l,j}$\;
    			evaluate $\mathbf{p}_{l|l}$ and quality factor $q_l$ for \newtrack\;
    			add (\newtrack, $\mathbf{p}_{l|l}$, $q_l$) to \tracklist\;
    		}
		}
		\If{no new candidate track was formed}{
			\newtrack $\leftarrow$ \track   $+ \ \mathbf{m}_{l|l-1}$\;
			evaluate $\mathbf{p}_{l|l}$ and quality factor $q_l$ for \newtrack\;
			add (\newtrack, $\mathbf{p}_{l|l}$, $q_l$) to \tracklist\;
		}
	    remove (\track, $\mathbf{p}_{l-1|l-1}$, $q_{l-1}$) from \tracklist\;
	}
select the best $\lambda$ tracks of \tracklist\;
}
add elements of \tracklist to output\;
}
\end{algorithm}

\begin{algorithm}
\SetArgSty{}
\caption{Cleaning (original)}
\label{algo:cleaning}
\SetKwInOut{Input}{input}\SetKwInOut{Output}{output}
\SetKwData{tracka}{track$_1$}
\SetKwData{trackb}{track$_2$}

\Input{candidate tracks, generated by Algorithm \ref{algo:finding}}
\Output{cleaned candidate tracks}

\ForEach{\tracka in candidate tracks}{
    \ForEach{\trackb (different from \tracka) in candidate tracks}{
        \If{\tracka and \trackb share more than allowed fraction of hits}{
            remove the one with lowest quality score from the set of candidate tracks\;}
    }
}
output remaining candidate tracks\;
\end{algorithm}

\begin{algorithm}
\SetArgSty{}
\caption{Cleaning (improved)}
\label{algo:improvedcleaning}
\SetKwInOut{Input}{input}\SetKwInOut{Output}{output}
\SetKwData{tuple}{tuple}
\SetKwData{track}{track}

\Input{candidate tracks, generated by Algorithm \ref{algo:finding}}
\Output{cleaned candidate tracks}

sort candidate tracks by quality score\;
set $R = \lceil f L \rceil$\;
initialize empty trees $\mathcal{T}_{i,j}$ for $i,j \in \{1, \ldots, R\}$\;
\ForEach{\track in candidate track}{
    set $r = \lceil f L \rceil$, where $L$ is number of hits of \track\;
    \For{$r'$ from $1$ to $r$}{
        \ForEach{$r'$-\tuple of \track}{
            \If{$r'$-tuple is in $\mathcal{T}_{r',r'}$}{
                remove \track from set of candidate tracks\;
            }
        }
    }
    \For{$r'$ from $r$ to $R$}{
        \ForEach{$r$-\tuple of \track}{
            \If{$r$-tuple is in $\mathcal{T}_{r,r'}$}{
                remove \track from set of candidate tracks\;
            }
        }
    }
    \If{\track has not been removed}{
        \For{$r'$ from $1$ to $r$}{
            \ForEach{$r'$-\tuple of \track}{
                insert $r'$-\tuple into tree $\mathcal{T}_{r',r}$\;
            }
        }
    }
}
output remaining candidate tracks\;
\end{algorithm}

\begin{algorithm}
\SetArgSty{}
\caption{Selection (classical)}
\label{algo:selection}
\SetKwInOut{Input}{input}\SetKwInOut{Output}{output}
\SetKwData{track}{track}

\Input{candidate tracks, cleaned by Algorithm \ref{algo:cleaning}}
\Output{final reconstructed tracks}

\ForEach{\track in candidate tracks}{
    calculate quality score of \track\;
    \If{quality score of \track $<$ threshold}{
        remove \track from the set of candidate tracks\;}
}
output remaining candidate tracks\;
\end{algorithm}

\begin{algorithm}
\SetArgSty{}
\caption{Track reconstruction in superposition}
\label{algo:reco_in_superposition}
\SetKwInOut{Input}{input}\SetKwInOut{Output}{output}
\SetKw{stop}{stop}
\SetKwData{tuple}{tuple}
\SetKwData{tcj}{track$_j$}
\SetKwData{tci}{track}
\SetKwBlock{Rpt}{repeat}

\Input{event record}
\Output{reconstructed tracks}
initialize empty self-balancing binary tree $\mathcal{T}$\;
initialize $i \leftarrow$ 0\;

\Rpt{
    \For{$j$ from $1$ to $\lceil\log(2 \lambda n^c)\rceil$}{
            \tcj $\leftarrow$ output of the quantum minimum finding algorithm minimizing $Q_i(\psi)$ -- the score encoded in the second register of $\mathbf{U}_i\ket{0}$ \eqref{eq:inner}\;
    }
    \tci $\leftarrow \arg\min_j{Q_i(\tcj)}$\;
    \eIf{\tci passes CTF criteria and none of its $r$-tuples are in $\mathcal{T}$}
    {
        \ForEach{$r$-\tuple of \tci}{
            insert $r$-\tuple into tree $\mathcal{T}$\;
        }
        add \tci to output\;
        $i \leftarrow i + 1$\;
    }
    {
        \stop\;
    }
}
\tcp*[h]{note: this description uses notation from Appendix \ref{app:reco_supersposition}
}
\end{algorithm}

\clearpage

\end{document}